\DeclareMathOperator*{\mean}{mean}
\DeclareMathOperator*{\argmax}{argmax}
\title{Sparse Bayesian Partially Identified Models for Sequence Count Data}
\author{Won Gu$^{1}$, 
Francesca Chiaromonte$^{1,5}$, and Justin D. Silverman$^{1,2,3,4*}$\email{JustinSilverman@psu.edu} \\
$^{1}$Department of Statistics, Pennsylvania State University, University Park, PA, U.S.A. \\
$^{2}$College of Information Sciences and Technology, Pennsylvania State University, University Park, PA, U.S.A. \\
$^{3}$Department of Medicine, Pennsylvania State University, Hershey, PA, U.S.A. \\
$^{4}$Institute for Computational and Data Science, Pennsylvania State University, University Park, PA, U.S.A. \\
$^{5}$L’EMbeDS, Sant’Anna School of Advanced Studies, Pisa, Italy}
\begin{document}


\date{{\it Received October} 2025. {\it Revised XX} 20XX.  {\it Accepted XX} 20XX.}



\pagerange{\pageref{firstpage}--\pageref{lastpage}} 
\volume{xx}
\pubyear{2025}
\artmonth{XX}


\doi{10.1111/j.1541-0420.2005.00454.x}


\label{firstpage}


\begin{abstract}
In genomics, differential abundance and expression  analyses are complicated by the compositional nature of sequence count data, which reflect only relative--not absolute--abundances or expression levels. Many existing methods attempt to address this limitation through data normalizations, but we have shown that such approaches imply strong, often biologically implausible assumptions about total microbial load or total gene expression. Even modest violations of these assumptions can inflate Type I and Type II error rates to over 70\%. Sparse estimators have been proposed as an alternative, leveraging the assumption that only a small subset of taxa (or genes) change between conditions. However, we show that current sparse methods suffer from similar pathologies because they treat sparsity assumptions as fixed and ignore the uncertainty inherent in these assumptions. We introduce a sparse Bayesian Partially Identified Model (PIM) that addresses this limitation by explicitly modeling uncertainty in sparsity assumptions. Our method extends the Scale-Reliant Inference (SRI) framework to the sparse setting, providing a principled approach to differential analysis under scale uncertainty. We establish theoretical consistency of the proposed estimator and, through extensive simulations and real data analyses, demonstrate substantial reductions in both Type I and Type II errors compared to existing methods.
\end{abstract}

%

\begin{keywords}
Mode-Estimation; Partial Identification; Sequence Count Data; Sparse Model.
\end{keywords}


\maketitle


%

\section{Introduction}

Sequence count technologies such as
16S rRNA microbiome surveys and RNA-seq gene expression profiling are crucial tools in many scientific fields. A primary objective of these assays is to identify microbes or genes that differ in amount (abundance or expression) between biological conditions (e.g., health and disease). This task, known as 
differential abundance or differential expression analysis, is statistically challenging. The data consist of high-dimensional count vectors with many zeros or near-zeros, and the total counts are arbitrary—unrelated to the \textit{scale} of the biological systems being measured (e.g., total microbial load in the human gut or total mRNA content in a cell). As a result, such data are referred to as compositional, meaning they encode only relative—not absolute—quantities~\citep{gloor2017notoptional}. The difficulty of analyzing compositional data has been recognized for over 125 years~\citep{pearson1897spurious} and remains an open problem. For example, a microbial taxon may appear more abundant in fecal samples from a diseased cohort simply because other taxa have decreased their loads (a relative increase). Similarly, a gene may appear less expressed in samples excised from tumors because other genes have increased their activation (a relative decrease). 

The most common approaches 
for addressing this scale ambiguity are referred to as normalizations.
These include techniques that transform the data to enforce a shared scale. For example, Total Sum Scaling (TSS) divides each sample by its total count making each transformed datum sum to one~\citep{weiss2017normal}. Other methods impose parameter constraints to achieve identifiability, such as restricting parameters to a simplex~\citep{love2014moderated,srinivasan2020knockoff}. However, recent work has shown that these strategies implicitly encode strong, often biologically implausible, assumptions about scale~\citep{nixon2024a,nixon2024b}. For instance, TSS effectively assumes that all individuals in a microbiome study have the same total microbial load~\citep{mcgovern2024intervals}, an assumption rarely supported by experimental evidence~\citep{vandeputte2017quantitative}. Even modest deviations from such assumptions can 
severely undermine the accuracy of an analysis, including false discovery rates exceeding 70\%, reduced power, and Type I error rates that paradoxically increase with larger sample sizes~\citep{nixon2024a,nixon2024b,konnaris2025mlscale}.

The emerging field of Scale-Reliant Inference (SRI) offers an alternative~\citep{nixon2024b}. SRI recognizes that sequence count data lack direct information about absolute scale, yet scientifically  meaningful
estimands often depend on it. This results in non-identifiability: the data alone are insufficient to uniquely determine the parameters of interest. Rather than forcing identifiability through untestable assumptions, SRI uses Bayesian Partially Identified Models (PIMs), often implemented as \textit{Scale Simulation Random Variables} (SSRVs). SSRVs introduce a prior over the missing scale parameters, explicitly modeling and propagating uncertainty in scale assumptions. By embracing this uncertainty, SRI yields more transparent and robust inference, often dramatically reducing Type I error while preserving or enhancing power~\citep{gloor2023sri,bennett2025scale,dossantos2024aldexvaginal,nixon2024a,nixon2024b}.

Despite these advances, recent methods based on sparse modeling assumptions have been largely overlooked in the SRI framework.
These methods are distinct from normalization-based strategies and typically assume that only a small subset of microbial taxa or genes change in abundance or expression across conditions~\citep{lin2014variable,srinivasan2020knockoff,grantham2020mimix,susin2020variableselectionmicrobiome,scott2023bayesdamicrobiome,zhang2021bayessparseda,monti2024knockoffmicrobiome,lin2020ancombc}. Such assumptions can be violated~\citep{mcgovern2024intervals}, but are often justifiable~\citep{nixon2024b}. Existing sparse estimators differ substantially in how they encode sparsity, estimate scale, and interpret effect sizes. It is therefore unclear which, if any, provide reliable and biologically meaningful results.

This article makes several key contributions. First, we extend SRI to sparse modeling and identify a common but underappreciated pitfall in methods that apply penalized estimators with a sum-to-zero constraint. Rather than capturing true biological sparsity, these constraints impose unrealistic assumptions about changes in total abundance or expression, which can lead to biased and misleading inferences. Second, we propose a new perspective on sparsity, defining it in terms of the mode of the log-fold-change distribution, rather than enforcing it through penalized estimation. This definition of sparsity is flexible; 
it can be employed even when many microbial taxa or genes--possibly 
more than 50\%--are subject to change,
and it motivates a new class of mode-estimating scale models within the SSRV framework, resulting in Sparse
SSRVs. Third, we establish consistency results for Sparse 
SSRVs. 
Fourth, we benchmark these methods extensively on simulated and real datasets, showing that they frequently achieve lower Type I error rates than existing methods while maintaining, and often improving, power. Notably, they outperform previous (dense) SSRVs when sparsity assumptions are valid. Finally, unlike many sparse estimators, our models fail gracefully under misspecification, tending toward conservative inference that prioritizes Type I error control. Collectively, these results suggest that Sparse 
SSRVs provide a promising, interpretable, and reliable tool for differential analysis under scale uncertainty.

The remainder of this article is organized as follows. Section~\ref{sec:sri-review} reviews SRI, formulating differential analyses as a rank-1 update problem. Section~\ref{sec:pitfall-normalized-sparsity} critically examines existing methods, highlighting the pitfalls of sparsity assumptions with sum-to-zero constraints. Section~\ref{sec:sparse-bpim} formalizes our mode-estimating scale models within the Bayesian PIM framework. Section~\ref{sec:consistency} establishes theoretical consistency results. Section~\ref{sec:benchmarks-simulation} presents simulation-based benchmarks independent of our modeling assumptions, and Section~\ref{sec:benchmarks-real} benchmarks the methods on real datasets with direct measurements of absolute scale. Section~\ref{sec:discussion} discusses limitations and future directions.

\section{An Overview of SRI and SSRVs}
\label{sec:sri-review}
Consider an observed \(D \times N\) sequence count table \(Y\). Throughout, we focus on differential abundance in microbiome studies as our running example, where each element \(Y_{dn}\) denotes the number of reads mapping to taxon \(d\) in sample \(n\). In SRI, \(Y\) is viewed as a noisy, incomplete measurement of an underlying biological system \(W\). Like \(Y\), \(W\) is a \(D \times N\) matrix, but its elements \(W_{dn}\) represent the true absolute abundances of taxa in their respective systems (e.g., microbial communities). \(W\) can be uniquely decomposed into composition and scale:
\setlength{\abovedisplayskip}{5pt plus 1pt minus 1pt}
\setlength{\belowdisplayskip}{5pt plus 1pt minus 1pt}
\begin{align}
W_{dn} &= W^{\parallel}_{dn}\, W^{\perp}_n, \label{eq:wdecomp}\\
W^{\perp}_{n} &= \sum_{d=1}^D W_{dn}, \nonumber
\end{align}
where \(W^{\parallel}\) is a \(D \times N\) matrix of proportional abundances (columns lie on the simplex) and \(W^{\perp}\) is an \(N\)-vector of positive total abundances (scales).

SRI imposes minimal assumptions on \(W\) or \(Y\), apart from a basic \emph{identification restriction}: the distribution of \(Y\) identifies \(W^{\parallel}\) but not \(W^{\perp}\)~\citep{nixon2024b}. Even with infinite data, \(Y\) uniquely determines the relative composition \(W^{\parallel}\) but provides no information about the absolute scale \(W^{\perp}\)~\citep{nixon2024b,gloor2017notoptional}. Consequently, inference must address two uncertainty sources: (1) sampling variability in estimating \(W^{\parallel}\) from finite data and (2) fundamental, unresolved uncertainty about \(W^{\perp}\).

Typically, inference targets functions of \(W\) rather than its individual elements. For differential abundance analysis, a natural estimand is the log-fold change (LFC):
\setlength{\abovedisplayskip}{5pt plus 1pt minus 1pt}
\setlength{\belowdisplayskip}{5pt plus 1pt minus 1pt}
\begin{equation}\label{eq:lfc}
\theta_{d} = \mean_{n:x_{n}=1} (\log W_{dn}) - \mean_{n:x_{n}=0} (\log W_{dn}),
\end{equation}
for a condition indicator \(x_{n}\). Substituting the decomposition in Equation~\eqref{eq:wdecomp} gives:
\setlength{\abovedisplayskip}{5pt plus 1pt minus 1pt}
\setlength{\belowdisplayskip}{5pt plus 1pt minus 1pt}
\begin{align}
\theta_d &= \underbrace{\mean_{n:x_n=1}(\log W^{\parallel}_{dn}) - \mean_{n:x_n=0}(\log W^{\parallel}_{dn})}_{\theta^{\parallel}_{d}}
+ \underbrace{\mean_{n:x_n=1}(\log W^{\perp}_{n}) - \mean_{n:x_n=0}(\log W^{\perp}_{n})}_{\theta^{\perp}}. \label{eq:lfc-decomp}
\end{align}
Hence, the vector of true LFCs \(\theta=(\theta_1,\dots,\theta_D)^\top\) admits a rank-1 update:
\setlength{\abovedisplayskip}{5pt plus 1pt minus 1pt}
\setlength{\belowdisplayskip}{5pt plus 1pt minus 1pt}
\begin{equation}\label{eq:rank-1-update}
\theta = \theta^{\parallel} + \theta^{\perp}\mathbf{1}_{D},
\end{equation}
where \(\theta^{\parallel}\) captures compositional LFCs and \(\theta^{\perp}\) the shift due to scale. This highlights a key simplification: inference on \(\theta\) depends only on the scalar \(\theta^{\perp}\), not the full vector \(W^{\perp}\). However, without additional information about \(\theta^{\perp}\), \(\theta\) remains only partially identified.

As detailed in \ref{sec:general-rank-1}, this rank-1 update formulation extends beyond LFCs to any linear functional of log-abundances (\(W\)), including regression coefficients and penalized smoothers such as spline estimators. Although we focus on LFCs, the same framework applies to a broad class of linear analyses beyond differential abundance.

\subsection{Normalizations and Scale Simulation Random Variables}\label{sec:norm-bayes-part}

Classical methods achieve \emph{artificial identifiability} by imposing strong, often biologically implausible, assumptions about scale. Because sequence count data contain no information about the absolute scale \(W^{\perp}\), these approaches enforce identifiability by fixing \(\theta^{\perp}\) to a single value—typically by (1) transforming data to share a common total (e.g., Total Sum Scaling (TSS)~\citep{weiss2017normal}) or (2) constraining latent parameters (e.g., sum-to-zero or simplex restrictions~\citep{love2014moderated,srinivasan2020knockoff}). Such procedures implicitly assume identical total microbial loads across samples~\citep{mcgovern2024intervals,vandeputte2017quantitative}, an assumption rarely supported empirically. Even mild violations can inflate false discovery rates above 75\% and distort inference~\citep{nixon2024a,nixon2024b}. In short, artificial identifiability trades biological plausibility for statistical convenience--providing precise answers to questions the data cannot, in truth, resolve.

\emph{Scale-Reliant Inference (SRI)} takes the opposite view: rather than enforcing identifiability, it explicitly models the uncertainty in the unidentifiable scale parameter \(\theta^{\perp}\). This is achieved through \emph{Scale Simulation Random Variables} (SSRVs), which represent the joint model
\[
p(\theta^{\parallel},\theta^{\perp}\mid Y)
  = p(\theta^{\parallel}\mid Y)\,p(\theta^{\perp}\mid\theta^{\parallel})\ ,
  \qquad 
  \theta^{\perp}\perp Y \mid \theta^{\parallel}
\]
where \(p(\theta^{\parallel}\mid Y)\)  is the \textit{measurement model} and \(p(\theta^{\perp}\mid\theta^{\parallel})\) is the \textit{scale model}. 
This factorization highlights both the philosophical and computational advantages of SRI. Unlike classical models, which falsely imply that \(Y\) identifies \(W^{\perp}\) (and therefore \(\theta^{\perp}\)), through the conditional independence relationship, SSRVs explicitly encode that no amount of data can update beliefs about \(\theta^{\perp}\) beyond prior information.

Even simple stochastic scale models can substantially improve inference. For instance, replacing a fixed normalization (e.g.,
the centered log-ratio transformation used in ALDEx2~\citep{fernandes2014unifying}) with a Gaussian prior on \(\log W^{\perp}\) propagates uncertainty in total load, reducing false discoveries while preserving power~\citep{nixon2024a,nixon2024b}. In practice, SSRVs \emph{fail gracefully}--yielding conservative inference when scale information is weak, and gaining power only when scale assumptions are well supported.
Details on 
implementing SSRVs and an 
ALDEx2 example are provided in \ref{sec:aldex-example}.

\section{Reviewing Sparsity Assumptions and Pitfalls}
\label{sec:pitfall-normalized-sparsity}

Although SRI theory for differential analysis has advanced substantially~\citep{mcgovern2024intervals,mcgovern2023gsea,nixon2024a,nixon2024b}, existing approaches have not been optimized for settings where only a small subset of taxa change between conditions—that is, where \(\theta=(\theta_{1},\dots,\theta_{D})\) is sparse. Many current methods attempt to exploit sparsity to gain identifiability, but often do so by introducing strong or biologically implausible assumptions. 
Here, we highlight some representative pitfalls and partial solutions.

\paragraph{Ignoring uncertainty in \(W^{\parallel}\)} A major limitation of many sparse estimators is that they treat the relative composition \(W^{\parallel}\) as known given the observed counts \(Y\). Common practice replaces \(W^{\parallel}\) by empirical proportions \(\tilde{W}^{\parallel}_{dn}=Y_{dn}/\sum_{d}Y_{dn}\)~\citep{zhou2022linda,lin2014variable,srinivasan2020knockoff,susin2020variableselectionmicrobiome,mandal2015ancom}, or by deterministic normalization factors~\citep{love2014moderated,robinson2010edger,ritchie2015limma}. This neglects sampling uncertainty due to finite sequencing depth--a well-documented source of inflated
Type I error in count data analyses~\citep{chen2024multiaddgps,gloor2016uncertainty}.

\paragraph{Imposing artificial sparsity through parameter constraints}
Many sparse methods also enforce constraints intended to guarantee scale invariance.
For example, \citet{lin2014variable} estimate regression coefficients
\(\beta\) via
\setlength{\abovedisplayskip}{5pt}
\setlength{\belowdisplayskip}{5pt}
\begin{equation}\label{eq:penalized-estimator}
\hat{\beta}
=\underset{\beta}{\text{argmin}}
\left(\frac{1}{2N}\|X-Z'\beta\|_2^2+\lambda\|\beta\|_1\right)
\quad\text{s.t.}\quad\sum_{d=1}^{D}\beta_d=0,
\end{equation}
where \(Z=\log(\tilde{W}^{\parallel})=\log(Y_{dn}/\sum_d Y_{dn})\).
The sum-to-zero constraint, once thought necessary for identifiability~\citep{aitchison1982codapaper},
is now recognized as unnecessary~\citep{nixon2024b} and biologically implausible.
In real systems--such as communities exposed to narrow-spectrum antibiotics--total abundance can decreases,
so \(\sum_d \beta_d < 0\); the constraint in \eqref{eq:penalized-estimator}
forces an unrealistic symmetry between losses and gains across taxa.

To assess its impact, we performed simple simulations varying sparsity level
and the sign distribution of true effects. Results (\ref{sec:ckf-sim})
show that even mild violations of the sum-to-zero assumption cause severe inflation
of false discovery rates and loss of power, demonstrating that performance depends on how a method enforces its sparsity assumptions.

Such a limitation is not unique to sparse methods based on Equation~\eqref{eq:penalized-estimator}—each sparse method carries its own implicit expectation about the degree of sparsity, determined by the specific mechanism it uses to enforce it. Existing methods encode the notion that ``most features remain unchanged'' by modeling the parameter distribution as concentrated around zero. This is often done with a point mass model where g, the density of $\theta$, is spiked at zero (e.g., spike-and-slab), or with a continuous unimodal model whose maximum occurs at zero (e.g., Laplace or double-exponential for LASSO, horseshoe). However, even if the true density g has a unique maximum at zero, it may not be as concentrated as these priors imply;
the way they enforce concentration around zero carries an implicit expectation about the overall degree of
sparsity, which may not match the true sparsity level. As a result, performance can deteriorate, as shown in
\ref{sec:ckf-sim}. We will later demonstrate in Section 5 how different sparse methods behave under varying degrees of sparsity.


\paragraph{Partial solutions}
A few approaches, such as MIMIX~\citep{grantham2020mimix}, use multinomial likelihoods with Laplace priors and avoid explicit parameter constraints. However, their strong mean-zero priors effectively reintroduce a weak sum-to-zero assumption, and relaxing these priors often leads to computational instability. Other models, including ANCOM-BC2~\citep{lin2020ancombc} and LinDA~\citep{zhou2022linda}, estimate log-fold changes from CLR-normalized parameters and then correct for the bias introduced by the normalization. LinDA, in particular, assumes that if few taxa are differentially abundant, the bias from CLR normalization can be mitigated by centering the mode of the estimated LFC distribution at zero. As discussed below, this represents a more principled way to encode sparsity than the penalized estimator in \eqref{eq:penalized-estimator}; however, it does not propagate uncertainty in the estimated mode itself--a key limitation that our framework directly addresses.

\section{Sparse Bayesian PIMs}
\label{sec:sparse-bpim}

Here, we extend the philosophy of Scale-Reliant Inference (SRI) to sparse settings by
treating sparsity as an uncertain but informative property of the data. The key insight,
from Equation~\eqref{eq:rank-1-update}, is that changes in the unidentifiable scale
parameter \(\theta^{\perp}\) induce a uniform shift across all log-fold changes:
\(\theta'=\theta+c\,\mathbf{1}_D\) for some constant \(c\in\mathbb{R}\).
If most entities are unchanged, the correct value of \(\theta^{\perp}\) is the shift that
centers the bulk of identifiable components \(\theta^{\parallel}_1,\dots,\theta^{\parallel}_D\)
at zero. Rather than placing a prior directly on \(\theta^{\perp}\), we estimate it as this
centering shift, propagating uncertainty about the shift in the same way prior SRI methods propagate uncertainty about scale.

Formally, we model log-fold changes as independent draws from a common distribution:
$\theta_d \overset{\text{iid}}{\sim} g, $
where \(g\) is a uniformly continuous density with a unique, finite mode. Potential dependencies among entities (e.g., shared pathways or phylogenetic structure) are absorbed into the shape of \(g\). Letting \(\theta^{\parallel}_d \overset{\text{iid}}{\sim} g^{\parallel}\) denote the identifiable compositional component, we introduce a new definition of sparsity.
\begin{definition}\label{def:sparse-system}
    A system is sparse if
    \setlength{\abovedisplayskip}{5pt}
    \setlength{\belowdisplayskip}{5pt}
    \begin{equation}\label{eq:sparse-system}
        \theta^{\perp} = - \argmax_{t \in [t_l, t_u]} g^{\parallel}(t),
    \end{equation}
    for some arbitrarily wide but finite interval \([t_l, t_u]\).
\end{definition}
The proposed definition is analogous to the \(L_0\) penalty, which regularizes the number of nonzero parameters. This allows us to model the parameter as concentrated around zero without imposing any assumption on the \textit{shape} of this concentration. It can naturally encompass (i) densities with a point mass at zero, (ii) continuous densities that are sharply concentrated around zero, and (iii) continuous densities with a mode at zero but a relatively flat peak--all of which are intuitively `sparse' but have traditionally been treated using different sparse priors or penalties. Consequently, based on Definition~\ref{def:sparse-system}, we can develop models that do not require users to specify hyperparameters or hyperpriors, and that account for uncertainty in the degree of sparsity through uncertainty in the mode of the distribution \(g^{\parallel}\). 

Note that the new definition is more rigorous and yet broader than the qualitative sparsity assumptions common in the literature~\citep{zhou2022linda,grantham2020mimix,srinivasan2020knockoff}: we do not require that most \(\theta_d\)'s are zero, only that the mode of the distribution corresponds to unchanged entities. Even when more than half of the entities are differentially abundant, inference remains valid as long as those effects are sufficiently dispersed that they do not dominate the mode.

\subsection{Mode-Estimating Scale Models}
\label{sec:mode-estimating-sms}

To operationalize our novel sparsity assumption, we extend the Scale Simulation Random Variable (SSRV) framework to estimate \(\theta^{\perp}\) as a function of \(\theta^{\parallel}\). The goal is to choose the shift that aligns the bulk of \(\theta^{\parallel}\) at zero, as implied by the required shift property.

As in previous SSRV applications~(\citealp{gloor2023sri};\citealp{nixon2024a,nixon2024b}), we first generate posterior samples of the identifiable component by fitting \(N\) independent multinomial–Dirichlet models:
\[
W^{\parallel}_{\cdot n} \mid Y_{\cdot n} \sim \text{Dirichlet}(Y_{\cdot n} + \alpha\,\mathbf{1}_D).
\]
For each posterior realization, we compute the corresponding log-fold changes in composition, \(\theta^{\parallel} = (\theta^{\parallel}_1,\dots,\theta^{\parallel}_D)\). A scale model \(p(\theta^{\perp}\mid \theta^{\parallel})\) is then used to generate draws of the unidentifiable scale component, which are combined via the rank-1 update (Equation~\ref{eq:rank-1-update}).
This defines a Bayesian Partially Identified Model (PIM) with a factored prior,
$p(\theta^{\parallel},\theta^{\perp}) = p(\theta^{\parallel})\,p(\theta^{\perp}\mid \theta^{\parallel})$
where \(p(\theta^{\parallel})\) is induced by the Dirichlet priors on \(W^{\parallel}\) (implying \(\mathbb{E}[\theta^{\parallel}]=0\)) and the conditional prior \(p(\theta^{\perp}\mid \theta^{\parallel})\) encodes sparsity.

At this stage, we specify only the expected value of the mode-estimating scale model \(p(\theta^{\perp} \mid  \theta^{\parallel})\) as 
$\mathbb{E}[\theta^{\perp} \mid \theta^{\parallel}] = -\psi(\theta^{\parallel})$
where \(\psi\) is a consistent mode estimator. Intuitively, this chooses the constant shift that moves the mode of \(\theta^{\parallel}\) to zero. We adopt Parzen’s kernel mode estimator~\citep{parzen1962estimation}:
\[
\hat{\theta}^{\perp} = - \argmax_{t\in[t_l,t_u]} p_D(t;\theta^{\parallel}),
\]
for some suitably large interval \([t_{l}, t_{u}]\) 
with a Gaussian kernel density estimator
where the bandwidth \(h(D)\) satisfies standard regularity conditions~\citep{parzen1962estimation}.

Compared to penalized estimator approaches, our approach has two major strengths. First, it avoids the need to choose the size of a penalty, which would otherwise impose assumptions on the degree of sparsity. Instead, the degree of sparsity is automatically determined by the concentration around the mode of \(p_D\). Second, it does not constrain sparsity to a single fixed degree. The mode varies across posterior samples, and consequently the number of nonzero parameters also varies, encoding uncertainty in the degree of sparsity. 

In Section~\ref{sec:consistency}, we show that under these conditions and the sparsity definition in Equation~\eqref{eq:sparse-system}, the resulting Bayes estimator of \(\theta\) is consistent. In Section~\ref{sec:uncertainty}, we extend this by specifying the full distribution \(p(\theta^{\perp}\mid\theta^{\parallel})\), allowing uncertainty about the shift to be explicitly modeled.

\subsection{Consistency of Sparse Bayesian PIMs}
\label{sec:consistency}
\citet{nixon2024b} showed that, without further assumptions, consistent estimation is impossible in SRI. 
However, under the sparsity assumption in Equation~\eqref{eq:sparse-system}, consistency becomes achievable. Our results assume a standard multinomial--Dirichlet measurement model 
for \(Y_{\cdot n}\), a uniformly continuous density \(g\) with a unique mode, 
and a scale model whose conditional mean 
\(\mathbb{E}[\theta^{\perp}\mid\theta^{\parallel}]\) equals the negative mode of Parzen’s kernel density estimator 
\(p_D(t;\theta^{\parallel})\) evaluated over a sufficiently wide interval \([t_l,t_u]\).
Finally, we let \(\lambda=\min(\lambda_{1}, \dots, \lambda_{N})\) denote the minimum sequencing depth afforded by the data. 
(Complete technical conditions and proofs are provided in \ref{sec:lemmas-proofs}.)

\begin{theorem}\label{thm:full_consistency}
Under these assumptions, the Bayes estimator of the Sparse SSRV is consistent:
\[
\bar{\boldsymbol{\theta}}
=\mathbb{E}[\boldsymbol{\theta}\mid Y]
\xrightarrow{p}\boldsymbol{\theta}^{*}
\quad\text{as }\lambda\to\infty \ ,\;D\to\infty \ ,
\]
where \(\boldsymbol{\theta}^{*}\) denotes the true parameter vector.
\end{theorem}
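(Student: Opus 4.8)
The plan is to exploit the rank-1 structure of Equation~\eqref{eq:rank-1-update} to split the Bayes estimator into an identifiable (measurement) piece and an unidentifiable (scale) piece, and then to show each vanishes under the appropriate limit. Writing $\bar{\boldsymbol{\theta}}=\E[\boldsymbol{\theta}\mid Y]$, applying the tower property, and using the conditional independence $\theta^{\perp}\perp Y\mid\theta^{\parallel}$ together with the scale-model specification $\E[\theta^{\perp}\mid\theta^{\parallel}]=-\psi(\theta^{\parallel})$, one obtains for each coordinate $\bar{\theta}_d=\E[\theta^{\parallel}_d\mid Y]-\E[\psi(\theta^{\parallel})\mid Y]$. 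Since the sparsity definition in Equation~\eqref{eq:sparse-system} makes the target $\theta^{*}_d=\theta^{\parallel*}_d-\argmax_t g^{\parallel}(t)$, I would control $\bar{\theta}_d-\theta^{*}_d$ through the three pieces
\[
\underbrace{\E[\theta^{\parallel}_d\mid Y]-\theta^{\parallel*}_d}_{\text{(A)}},\qquad
\underbrace{\E[\psi(\theta^{\parallel})\mid Y]-\psi(\theta^{\parallel*})}_{\text{(B)}},\qquad
\underbrace{\psi(\theta^{\parallel*})-\argmax_t g^{\parallel}(t)}_{\text{(C)}},
\]
where $\theta^{\parallel*}=(\theta^{\parallel*}_1,\dots,\theta^{\parallel*}_D)$ collects the true compositional log-fold changes, so that $\bar{\theta}_d-\theta^{*}_d=\text{(A)}-\text{(B)}-\text{(C)}$.

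First I would dispatch the measurement error (A). As the minimum depth $\lambda\to\infty$, the posterior $\mathrm{Dirichlet}(Y_{\cdot n}+\alpha\mathbf{1}_D)$ concentrates on the empirical proportions, which converge to the true $W^{\parallel*}_{\cdot n}$ by the multinomial law of large numbers; the $\alpha$ pseudocount keeps the log-transform well defined despite zero counts. This yields posterior concentration of $\theta^{\parallel}$ at $\theta^{\parallel*}$ and hence $\text{(A)}\xrightarrow{p}0$, which is precisely the measurement-model consistency established in the supporting lemmas (\ref{sec:lemmas-proofs}).

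Next, (C) is Parzen's classical result. Because $\theta^{\parallel*}_1,\dots,\theta^{\parallel*}_D$ are i.i.d.\ draws from the uniformly continuous density $g^{\parallel}$ with a unique mode, the Gaussian kernel mode estimator $\argmax_{t\in[t_l,t_u]}p_D(t;\theta^{\parallel*})$ is consistent for $\argmax_t g^{\parallel}(t)$ as $D\to\infty$, provided the bandwidth obeys the standard regularity conditions $h(D)\to0$ and $D\,h(D)\to\infty$~\citep{parzen1962estimation}. The unique-mode and uniform-continuity assumptions guarantee that the maximizer is well separated, so that uniform convergence of $p_D$ to $g^{\parallel}$ over the compact interval transfers to convergence of the argmax, giving $\text{(C)}\xrightarrow{p}0$.

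The delicate term is (B), which couples the two limits and is where I expect the main obstacle to lie. For fixed $D$ the map $\psi$—a Gaussian kernel density followed by an argmax over $[t_l,t_u]$ with a (for large $D$) unique, well-separated maximizer—is continuous in its arguments, so the posterior concentration from (A) forces $\psi(\theta^{\parallel})$ to concentrate at $\psi(\theta^{\parallel*})$ and $\text{(B)}\xrightarrow{p}0$ as $\lambda\to\infty$. The difficulty is that the continuity modulus of $\psi$ degrades as $D\to\infty$: the bandwidth $h(D)$ shrinks, making the estimated mode increasingly sensitive to perturbations of the inputs $\theta^{\parallel}_d$. To take the limits jointly I would require $\lambda$ to grow fast enough relative to $D$ that the per-feature measurement error is $o(h(D))$ uniformly over the growing set of features; the cleanest route is an iterated limit—fix $D$, send $\lambda\to\infty$ to annihilate (A) and (B), then send $D\to\infty$ for (C)—which matches the layering of the supporting lemmas. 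Assembling the three pieces gives $\bar{\theta}_d-\theta^{*}_d\xrightarrow{p}0$ for every coordinate, and since the scale correction $-\E[\psi(\theta^{\parallel})\mid Y]$ is common to all $d$, the convergence lifts to the whole vector, establishing $\bar{\boldsymbol{\theta}}\xrightarrow{p}\boldsymbol{\theta}^{*}$.
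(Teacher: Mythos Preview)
Your proposal is correct and follows essentially the same route as the paper: posterior concentration of the Dirichlet--multinomial measurement model (your term (A), the paper's Lemmas~\ref{lem:consistency_W}--\ref{lem:consistency_theta_pll}), continuity of the kernel-argmax functional to transfer that concentration to the mode estimate (your (B); the paper invokes the Maximum theorem for this in Lemma~\ref{lem:consistency_mode}), and Parzen's uniform-consistency theorem for the mode (your (C)), all combined through the rank-1 decomposition and an iterated $\lambda\to\infty$ then $D\to\infty$ limit. One small slip: the bandwidth regularity condition you cite should be $D\,h(D)^2\to\infty$ rather than $D\,h(D)\to\infty$, matching the uniform-convergence requirement in Parzen's Theorem~3A.
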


\begin{corollary}\label{cor:posterior_consistency_ssrv}
Assume that, in addition, the posterior variance of the scale model vanishes asymptotically;
$
\mathrm{Var}(\theta^{\perp}\mid\theta^{\parallel})
\xrightarrow[\lambda\to\infty,D\to\infty]{}0.
$
Then the full posterior is consistent: 
\[
\theta_d\mid Y \xrightarrow{p} \theta^{*}_d  
\quad\text{as }\lambda\to\infty \ ,\;D\to\infty \ ,
\quad 
\forall d=1,\ldots,D \ .
\]

\end{corollary}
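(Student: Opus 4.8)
The plan is to deduce full posterior concentration from two ingredients that are already within reach: the convergence of the posterior mean granted by Theorem~\ref{thm:full_consistency}, and the vanishing of the posterior variance $\mathrm{Var}(\theta_d\mid Y)$. Once both hold, a Chebyshev bound gives, for every $\epsilon>0$,
\[
\mathbb{P}\!\left(|\theta_d-\theta^{*}_d|>\epsilon \,\middle|\, Y\right)
\le
\frac{\mathrm{Var}(\theta_d\mid Y)+\big(\mathbb{E}[\theta_d\mid Y]-\theta^{*}_d\big)^{2}}{\epsilon^{2}}
\xrightarrow[\lambda\to\infty,\,D\to\infty]{}0,
\]
which is exactly the claimed $\theta_d\mid Y\xrightarrow{p}\theta^{*}_d$ for each $d$ (the squared-bias term vanishes by Theorem~\ref{thm:full_consistency}, and the right-hand side is a function of $Y$ converging in data-probability). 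The entire task therefore reduces to showing that the posterior variance tends to zero.

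For that I would condition on the identifiable component $\theta^{\parallel}$ and apply the law of total variance. Since $\theta_d=\theta^{\parallel}_d+\theta^{\perp}$ with $\theta^{\parallel}_d$ fixed given $\theta^{\parallel}$, and since the SSRV factorization enforces $\theta^{\perp}\perp Y\mid\theta^{\parallel}$ together with $\mathbb{E}[\theta^{\perp}\mid\theta^{\parallel}]=-\psi(\theta^{\parallel})$, this yields
\[
\mathrm{Var}(\theta_d\mid Y)
=\underbrace{\mathbb{E}\!\big[\mathrm{Var}(\theta^{\perp}\mid\theta^{\parallel})\,\big|\,Y\big]}_{A}
+\underbrace{\mathrm{Var}\!\big(\theta^{\parallel}_d-\psi(\theta^{\parallel})\,\big|\,Y\big)}_{B}.
\]
Term $A$ is dispatched directly by the additional hypothesis: because $\mathrm{Var}(\theta^{\perp}\mid\theta^{\parallel})\to0$ and this conditional variance is uniformly bounded over the compact mode-search interval $[t_l,t_u]$, bounded convergence gives $A\to0$.

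Term $B$ is where the substantive work lies. I would expand it as $\mathrm{Var}(\theta^{\parallel}_d\mid Y)+\mathrm{Var}(\psi(\theta^{\parallel})\mid Y)-2\,\mathrm{Cov}(\theta^{\parallel}_d,\psi(\theta^{\parallel})\mid Y)$ and bound each piece under the Dirichlet-induced posterior. As $\lambda\to\infty$ the multinomial--Dirichlet posterior on $W^{\parallel}_{\cdot n}$ concentrates on the true proportions, so $\mathrm{Var}(\theta^{\parallel}_d\mid Y)\to0$; this is the same measurement-model concentration already used in the proof of Theorem~\ref{thm:full_consistency}. For the remaining variance I would reuse the stability and consistency of Parzen's mode estimator established there: as $D\to\infty$ and the posterior draws of $\theta^{\parallel}$ tighten around values whose empirical density converges uniformly to $g^{\parallel}$, the maximizer $\psi(\theta^{\parallel})$ concentrates at $\argmax_{t\in[t_l,t_u]}g^{\parallel}(t)=-\theta^{\perp,*}$, so $\mathrm{Var}(\psi(\theta^{\parallel})\mid Y)\to0$. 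The covariance is then controlled by Cauchy--Schwarz, so $B\to0$.

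I expect the main obstacle to be the mode-estimator term $\mathrm{Var}(\psi(\theta^{\parallel})\mid Y)$. Unlike the linear statistic $\theta^{\parallel}_d$, the mode is a nonlinear, $\argmax$-type functional of the kernel density estimate $p_D(\cdot;\theta^{\parallel})$, so controlling its posterior dispersion requires a uniform-convergence argument for $p_D$ together with a curvature (local-identifiability) condition at the unique mode of $g^{\parallel}$, guaranteeing that small perturbations of the density move its maximizer only slightly. Provided these regularity conditions---precisely the ones underpinning Parzen's consistency and already assumed for Theorem~\ref{thm:full_consistency}---are in force, both $A$ and $B$ vanish, the posterior variance tends to zero, and the Chebyshev bound of the first paragraph completes the proof.
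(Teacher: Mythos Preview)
Your proposal is correct and follows essentially the same route as the paper: a Chebyshev bound combined with the law-of-total-variance decomposition $\mathrm{Var}(\theta_d\mid Y)=\mathbb{E}[\mathrm{Var}(\theta^{\perp}\mid\theta^{\parallel})\mid Y]+\mathrm{Var}(\mathbb{E}[\theta_d\mid\theta^{\parallel}]\mid Y)$, with the first term handled by the assumed variance decay and the second by the measurement-model and mode-estimator concentration underlying Theorem~\ref{thm:full_consistency}. You are somewhat more careful than the paper in writing the Chebyshev bound with the explicit squared-bias term and in decomposing term~$B$ into its three pieces with a Cauchy--Schwarz control on the covariance, but the structure of the argument is identical.
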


The variance condition in the Corollary~\ref{cor:posterior_consistency_ssrv} is mild and holds for many practical scale models, including 
the nonparametric model introduced in Section~\ref{sec:uncertainty}, 
where variance shrinks naturally with increasing \(D\). 
As we 
discuss next, deliberately overdispersed models can also be used to improve finite-sample robustness, at the cost of strict posterior consistency.

\subsection{Uncertainty in Mode-Estimating Scale Models}
\label{sec:uncertainty}
We now describe 
the Parzen-based mode-estimating scale model
employed in our proposal.
Recall that 
\(\mathbb{E}[\theta^{\perp}\mid\theta^{\parallel}] = -\psi(\theta^{\parallel})\),
where \(\psi\) is Parzen’s mode estimator.
A degenerate specification,
\(p(\theta^{\perp}\mid\theta^{\parallel})=\delta_{-\psi(\theta^{\parallel})}\),
would ignore uncertainty and yield overconfident inference.
A well-calibrated SSRV must 
reflect uncertainty arising from
(i) dispersion of \(g^{\parallel}\) near its mode,
(ii) limited sample size \(N\),
(iii) finite feature dimension \(D\),
and (iv) finite sequencing depth \(\lambda\).
Only the last of these is propagated by the multinomial--Dirichlet measurement model;
the others must be incorporated through the scale model to avoid
underestimated variance, particularly when sparsity is weak or approximate.

To address this, we adopt a pragmatic approach that combines two complementary
scale models—each centered on Parzen’s mode estimator but differing in how they
quantify additional uncertainty.

\paragraph{1. Laplace-based scale model}  
To account for dispersion in \(g^{\parallel}\) near its mode, we approximate local curvature with a Laplace expansion:
$p(\theta^{\perp}) \;=\; \mathcal{N}(\hat{\theta}^{\perp},\,\tau^2)$
where \(\tau^2\) is the negative inverse Hessian of the estimated density \(p_D(t;\theta^{\parallel})\) evaluated at the estimated mode \(t=\hat{\theta}^{\perp}\). This model reflects how sharply (or diffusely) the density concentrates around its mode. If \(g^{||}\) is smooth at its mode, the Laplace-based scale model yields an overdispersed scale model because the variance does not converge to zero asymptotically.

\paragraph{2. Bootstrap-based scale model}  
To capture variability due to finite \(N\) and finite \(D\), we adopt a two-stage non-parametric bootstrap:  
(i) Resample the \(N\) samples with replacement, recomputing the LFC vector \(\theta^{\parallel}\) to incorporate sample-level uncertainty; and
(ii) resample the \(D\) components of this vector and reapply Parzen’s mode estimator to obtain a bootstrap realization of \(\theta^{\perp}\).  
Repeated resampling propagates uncertainty arising from both the limited number of samples and the finite number of observed features.

In practice, we run both scale models and retain the one with the larger posterior variance for \(\theta^{\perp}\), ensuring conservative inference and stronger Type~I error control when sparsity is uncertain. Pseudo-code is provided in \ref{sec:pseudo_code}. The bootstrap-based model satisfies the shrinking-variance condition of Section~\ref{sec:consistency} and thus achieves full posterior consistency, but combining it with the Laplace-based model yields superior finite-sample performance.

\section{Simulation Studies}
\label{sec:benchmarks-simulation}

We evaluated the Sparse SSRV through simulations varying four factors: the number of features (\(D\)), sample size (\(N\)), sequencing depth (\(\lambda\)), and degree of sparsity. Synthetic datasets were generated using \texttt{SparseDOSSA2}~\citep{ma2021simulation}, which reproduces realistic microbial count distributions based on the Human Microbiome Project (HMP1–2) stool profiles. Case–control contrasts were created by specifying metadata and effect sizes that altered mean absolute abundances between conditions. Detailed simulation parameters and extended results are reported in \ref{sec:sim-benchmark-details}.

\begin{figure}[ht]
    \centering
    \includegraphics[width=1.1\linewidth]{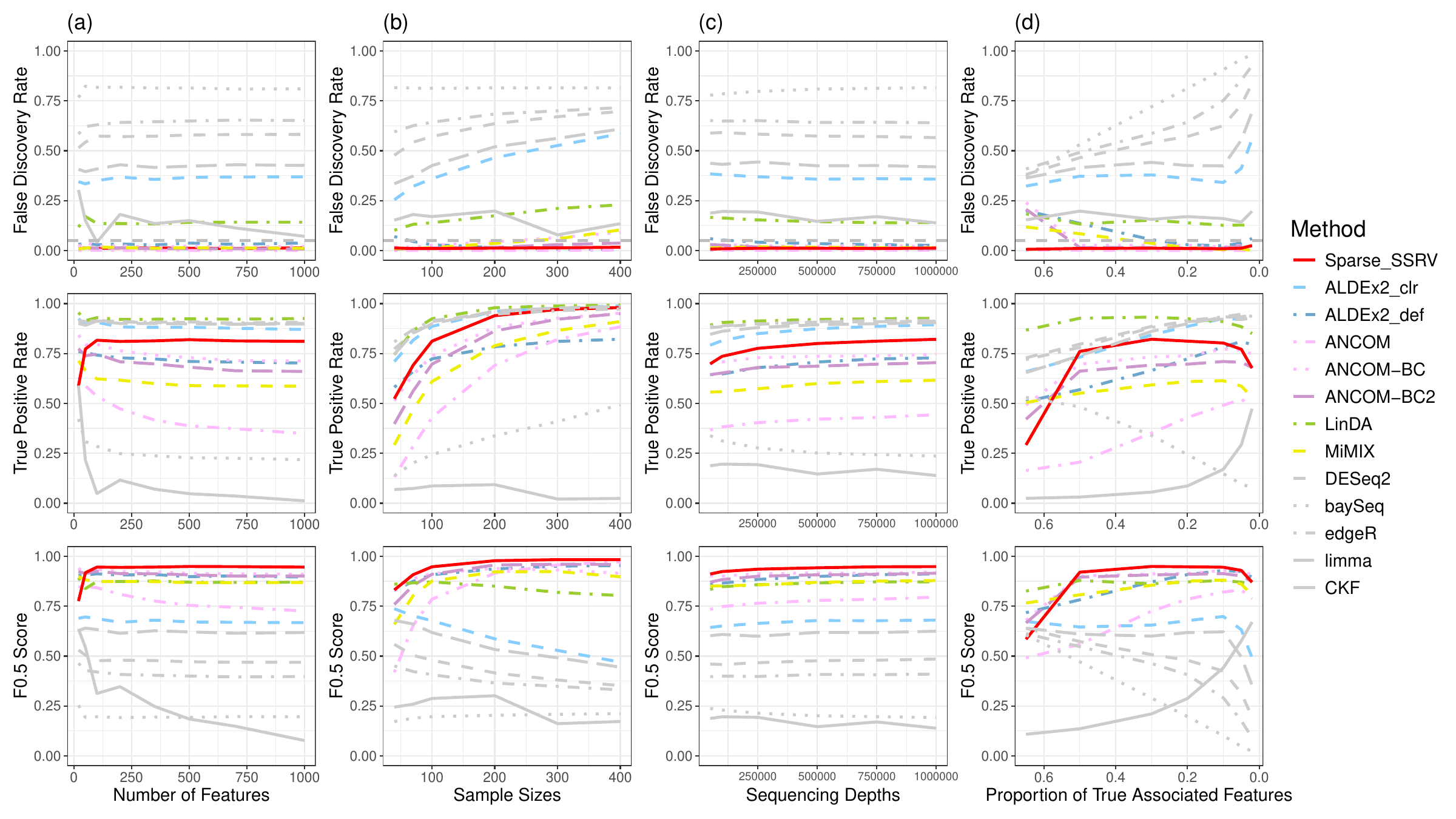}
    \caption{Performance of Sparse SSRV and competing methods using simulated data from SparseDOSSA2 trained on HMP1 and 2 stool profiles. Each panel shows how performance changes with (a) number of features, (b) sample size, (c) sequencing depth, and (d) proportion of true associated features. Baseline values are 300 features, 100 samples, 750,000 reads, and 20\% true associations. In (a), sequencing depth increases linearly with the number of features (2,500 reads per feature) to avoid low-abundance features converging to zero.}
    \label{fig:simulation}
\end{figure}

\begin{figure}[ht]
    \centering
    \includegraphics[width=1\linewidth]{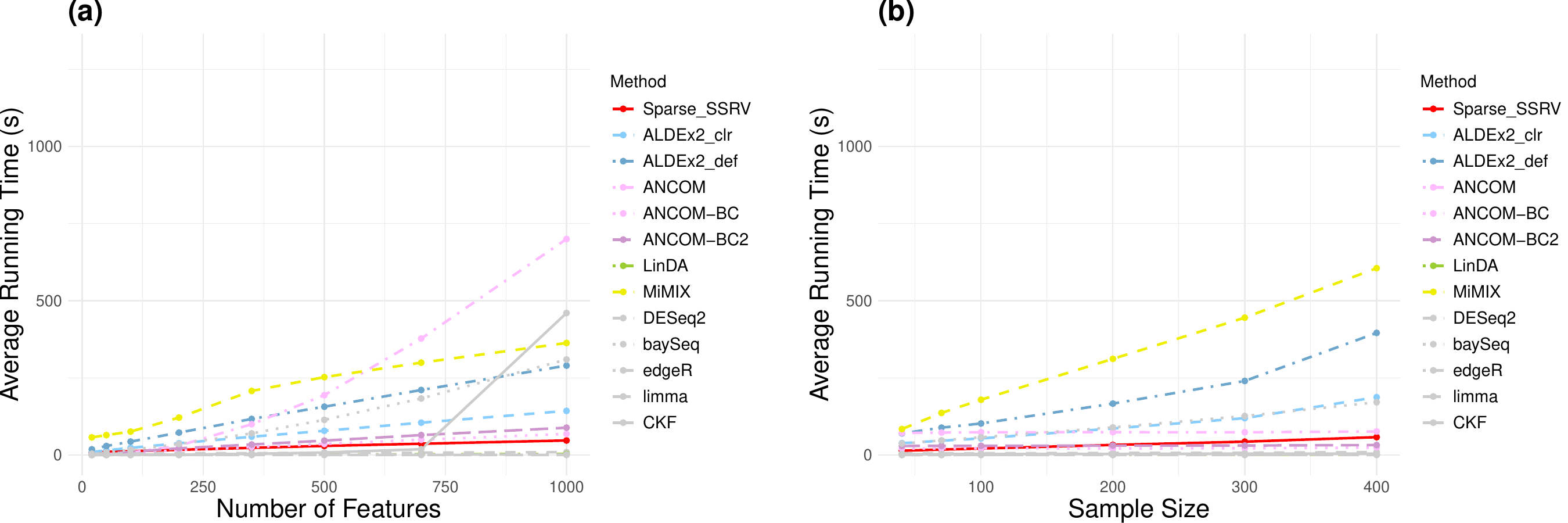}
    \caption{Average runtime of Sparse SSRV and competing methods 
    (color and line type coding match those in Figure~\ref{fig:simulation}). The sample size is fixed at 100 for panel (a), and the number of features is fixed at 300 for panel (b).}
    \label{fig:runtime}
\end{figure}

We compared Sparse SSRV with eleven representative methods: ALDEx2, DESeq2, edgeR, limma, baySeq, MiMIX, ANCOM, ANCOM-BC, ANCOM-BC2, CKF, and LinDA. For ALDEx2, both the standard CLR implementation (\texttt{ALDEx2\_clr}) and its recent stochastic-scale extension (\texttt{ALDEx2\_def}) were evaluated~\citep{nixon2024a}. Methods with extremely high false discovery rate (FDR) or low true positive rate (TPR) appear in gray in Figure~\ref{fig:simulation}; others are highlighted. The \(F_{0.5}\) score summarizes the TPR–FDR trade-off, and its definition is provided in \ref{sec:sim-benchmark-details}.


\paragraph{Number of features}  
Sparse SSRV maintained FDR below 0.05 for all \(D\) while TPR increased with dimensionality, 
in line with the consistency results of Section~\ref{sec:consistency}. Other methods (ANCOM, ANCOM-BC, ANCOM-BC2, ALDEx2\_def, MiMIX) also controlled FDR but lost power as \(D\) grew, indicating reduced efficiency in high dimensions. 


\paragraph{Sample size}  

Most methods gained power with increasing \(N\) but at the cost of inflated FDR. Sparse SSRV preserved strict FDR control and improved TPR as \(N\) grew by adapting uncertainty in \(\theta^{\perp}\) to information in \(\theta^{\parallel}\). In contrast, ALDEx2\_def remained conservative at large \(N\) because its default scale variance does not shrink with sample size.

\paragraph{Sequencing depth}  
Increasing \(\lambda\) improved TPR for all approaches until a plateau was reached. Sparse SSRV consistently maintained FDR below 0.05 across depths, matching its theoretical behavior under increasing \(\lambda\).

\paragraph{Degree of sparsity}  

Sparse SSRV remained robust even when most taxa changed, controlling FDR below 0.05 when 65\% of features were relevant and achieving the highest TPR among methods with acceptable FDR. This demonstrates that the ``sparse system'' definition (Section~\ref{sec:sparse-bpim}) is more flexible than the strict sparsity assumptions used elsewhere.


\paragraph{Computational efficiency}
Sparse SSRV scaled efficiently with both \(D\) and \(N\), matching or outperforming comparably accurate methods in runtime (Figure~\ref{fig:runtime}); only LinDA was faster at very large \(D\).


\section{Real Data Studies}
\label{sec:benchmarks-real}

We evaluated Sparse SSRV, along with competitor methods, on six published datasets spanning diverse hosts, sample types, and sparsity structures \citep{mcmurrough2014selex,zemb2020absolute,stammler2016adjusting,jin2022quantsinglecell,vandeputte2017quantitative,prasse2015soil}.
Key dataset characteristics are summarized in Table~\ref{tab:real_summary}.
For datasets lacking direct ground truth (Jin, Vandeputte, Prasse), we followed
\citet{nixon2024b} and used ALDEx2 with a measurement-based scale model
as the reference (details can be found in \ref{sec:detail_real}).
These six datasets span diverse hosts, sample types, and
degrees of sparsity, enabling a broad evaluation of the comparative performance of Sparse SSRV under realistic conditions. 
Results 
for all datasets are summarized in Table~\ref{tab:real_results}.

\begin{table}[ht]
    \centering
    \resizebox{1\textwidth}{!}{
    \begin{tabular}{c c c c c c c c}
    \toprule
        Scenario & Dataset & Sample & \shortstack{Dimension \\ (D$\times$N)} & Reference & \shortstack{Sequencing \\ method} & Truth Definition & \shortstack{Signal Sparsity \\ (Truths / D)} \\
        \cmidrule(r){1-1}\cmidrule(lr){2-2}\cmidrule(l){3-3}\cmidrule(l){4-4}\cmidrule(l){5-5}\cmidrule(l){6-6}\cmidrule(l){7-7}\cmidrule(l){8-8}
        \multirow{4}{*}{\shortstack{ \\ \\ Sparse \\ Low Variance}} & \shortstack{McMurrough et al \\ (2014)} & \shortstack{} & 1600 $\times$ 14 & \cite{mcmurrough2014selex} & 16s rRNA & Known Truth & 0.017\\
        & \shortstack{Zemb et al \\ (2020)} & \shortstack{Pig \\ (fecal)} & 428 $\times$ 10 & \cite{zemb2020absolute} & 16s rRNA & Known Truth & 0.002\\
        & \shortstack{Stämmler et al \\ (2016)} & \shortstack{Mice \\ (cecum content)} & 44 $\times$ 12 & \cite{stammler2016adjusting} & 16s rRNA & Known Truth & 0.048\\
        & \shortstack{Jin et al \\ (2022)} & \shortstack{Mice \\ (cecum content)} & 132 $\times$ 16 & \cite{jin2022quantsinglecell} & 16s rRNA & Gold Standard Model & 0.005 \\
        \midrule
        \shortstack{Sparse \\ High Variance} & \shortstack{Prasse et al \\ (2015)} & \shortstack{Soil} & 181 $\times$ 57 & \cite{prasse2015soil} & 16s rRNA & Gold Standard Model & 0.119 \\
        \midrule
        \shortstack{Not Sparse \\ High Variance} & \shortstack{Vandeputte et al \\ (2017)} & \shortstack{Human \\ (fecal)} & 29 $\times$ 95 & \cite{vandeputte2017quantitative} & 16s rRNA & Gold Standard Model & 0.793 \\
        \bottomrule
    \end{tabular}
}
    \caption{Characteristics of the six datasets used to evaluate differential analysis (DA) performance.
    Datasets are categorized as \textit{sparse/low variance}, \textit{sparse/high variance}, or \textit{not sparse}. Here, ``sparse'' indicates that the underlying system satisfies the sparse system assumption utilized in Sparse SSRV. Variance refers to variability in the LFCs of null features, which affects the distinctness of the mode in the LFC density. For \citet{mcmurrough2014selex}, the sample type is omitted because data were generated in vitro via SELEX.}
    \label{tab:real_summary}
\end{table}

\begin{table}[ht]
    \renewcommand{\arraystretch}{1.5}  
\resizebox{0.72\textwidth}{!}{
\begin{minipage}{.1\linewidth}
  \centering
  \vspace{2.52cm}
  \scalebox{1}[1]{
  \begin{tabular}{c c}
    \toprule
    \multirow{4}{*}{\shortstack{ \\ \\ Sparse \\ Low Variance}} & \shortstack{McMurrough \\ (2014)} \\
    & \shortstack{Zemb \\ (2020)} \\
    & \shortstack{Stämmler \\ (2016)} \\
    & \shortstack{Jin \\ (2022)} \\ \hline
    \shortstack{Sparse \\ High Variance} & \shortstack{Prasse \\ (2015)} \\ \hline
    \shortstack{Not Sparse} & \shortstack{Vandeputte \\ (2017)} \\
    \bottomrule
  \end{tabular}
  }
\end{minipage}%
\hspace{4.1cm}
\begin{minipage}{.1\linewidth}
  \centering
  \vspace{1.45cm}
  \scalebox{1}[1.06]{
  \begin{tabular}{c}
    \shortstack{D x N \\ (Truth)} \\
    \midrule
    \shortstack{1600 x 14 \\ (27)} \\
    \shortstack{428 x 10 \\ (1)} \\
    \shortstack{44 x 12 \\ (2)} \\
    \shortstack{132 x 16 \\ (1)} \\ \hline
    \shortstack{181 x 57 \\ (10)} \\ \hline
    \shortstack{29 x 95 \\ (23)} \\
    \bottomrule
  \end{tabular}
  }
\end{minipage}%
\hspace{.5cm}
\begin{minipage}{.57\linewidth}
  \centering
  \vspace{-0.5cm}
  \renewcommand{\arraystretch}{1.5}
  \scalebox{1}[1.06]{
  \begin{tabular}{c c c c c c c c c c c c}
    \rotatebox{75}{Sparse} & 
    \rotatebox{75}{ALDEx2\_def} & 
    \rotatebox{75}{DESeq2} & 
    \rotatebox{75}{edgeR} & 
    \rotatebox{75}{limma} & 
    \rotatebox{75}{baySeq} & 
    \rotatebox{75}{MiMIX} & 
    \rotatebox{75}{ANCOM-BC2} & 
    \rotatebox{75}{ANCOM-BC} & 
    \rotatebox{75}{ANCOM} & 
    \rotatebox{75}{LinDA} &
    \rotatebox{75}{CKF}\\
    \midrule
    \shortstack{\textbf{9} \\ \textbf{(10)}} & 
    \shortstack{15 \\ (9)} & 
    \shortstack{577 \\ (21)} & 
    \shortstack{320 \\ (25)} & 
    \shortstack{93 \\ (7)} & 
    \shortstack{364 \\ (9)} & 
    \shortstack{541 \\ (24)} & 
    \shortstack{62 \\ (10)} & 
    \shortstack{1314 \\ (18)} & 
    \shortstack{29 \\ (9)} & 
    \shortstack{999 \\ (22)} &
    \shortstack{0 \\ (0)}\\ 
    \shortstack{\textbf{0} \\ \textbf{(1)}} & 
    \shortstack{0 \\ (0)} & 
    \shortstack{3 \\ (1)} & 
    \shortstack{2 \\ (1)} & 
    \shortstack{1 \\ (1)} & 
    \shortstack{9 \\ (1)} & 
    \shortstack{0 \\ (0)} & 
    \shortstack{1 \\ (0)} & 
    \shortstack{37 \\ (1)} & 
    \shortstack{0 \\ (0)} & 
    \shortstack{1 \\ (0)} &
    \shortstack{0.76 \\ (0.76)} \\ 
    \shortstack{\textbf{0} \\ \textbf{(2)}} & 
    \shortstack{\textbf{0} \\ \textbf{(2)}} & 
    \shortstack{\textbf{0} \\ \textbf{(2)}} & 
    \shortstack{\textbf{0} \\ \textbf{(2)}} & 
    \shortstack{\textbf{0} \\ \textbf{(2)}} & 
    \shortstack{2 \\ (0)} & 
    \shortstack{0 \\ (1)} & 
    \shortstack{\textbf{0} \\ \textbf{(2)}} & 
    \shortstack{1 \\ (2)} & 
    \shortstack{\textbf{0} \\ \textbf{(2)}} & 
    \shortstack{2 \\ (2)} &
    \shortstack{\textbf{0} \\ \textbf{(2)}} \\ 
    \shortstack{4 \\ (1)} & 
    \shortstack{\textbf{0} \\ \textbf{(1)}} & 
    \shortstack{14 \\ (1)} & 
    \shortstack{22 \\ (1)} & 
    \shortstack{5 \\ (1)} & 
    \shortstack{12 \\ (0)} & 
    \shortstack{3 \\ (0)} &
    \shortstack{\textbf{0} \\ \textbf{(1)}} & 
    \shortstack{10 \\ (1)} & 
    \shortstack{\textbf{0} \\ \textbf{(1)}} & 
    \shortstack{17 \\ (1)} &
    \shortstack{3.6 \\ (1)} \\ \hline
    \shortstack{\textbf{0} \\ \textbf{(1)}} & 
    \shortstack{3 \\ (0)} & 
    \shortstack{61 \\ (2)} & 
    \shortstack{10 \\ (1)} & 
    \shortstack{16 \\ (10)} & 
    \shortstack{14 \\ (0)} & 
    \shortstack{1 \\ (0)} & 
    \shortstack{11 \\ (1)} & 
    \shortstack{82 \\ (3)} & 
    \shortstack{0 \\ (0)} & 
    \shortstack{18 \\ (6)} &
    \shortstack{0.57 \\ (0.45)} \\ \hline
    \shortstack{3 \\ (0)} & 
    \shortstack{3 \\ (5)} & 
    \shortstack{6 \\ (16)} & 
    \shortstack{6 \\ (13)} & 
    \shortstack{6 \\ (6)} & 
    \shortstack{\textbf{4} \\ \textbf{(16)}} & 
    \shortstack{2 \\ (4)} & 
    \shortstack{2 \\ (6)} & 
    \shortstack{5 \\ (14)} & 
    \shortstack{1 \\ (3)} & 
    \shortstack{6 \\ (7)} &
    \shortstack{2.06 \\ (3.45)} \\ 
    \bottomrule
  \end{tabular}
  }
\end{minipage}
}

    \caption{Summary of DA results across six
    datasets. Each cell reports the number of false positives, with true positives in parentheses. Methods with the best false-to-true positive ratios are bolded. CKF results are averaged over 100 runs due to high variability.}
    \label{tab:real_results}
\end{table}

\paragraph{Sparse system with low variance
(ideal setting)}  
Four datasets (McMurrough, Zemb, Stämmler, Jin) display sparse signals with low
variance among null features.
Sparse SSRV achieved the lowest false-to-true positive ratios in three of the four,
recovering more true positives than competitors and maintaining perfect FDR control.
In McMurrough and Zemb, where total scale differed markedly between conditions,
Sparse SSRV identified all true positives without false discoveries;
in Stämmler and Jin, where scales were similar, several methods performed comparably,
but Sparse SSRV remained competitive in both FDR and power.

\paragraph{Sparse system with high variance (challenging setting)}  
The Prasse dataset exhibits dispersed null effects and no clear LFC mode.
Here, all methods struggled; Sparse SSRV detected one true positive but no false
positives, demonstrating its conservative behavior.
Methods such as limma, ANCOM-BC, and LinDA recovered more true positives but
showed inflated FDR elsewhere, whereas Sparse SSRV maintained error control
across all settings.

\paragraph{Dense system (violation of the sparse system assumption)} 
Vandeputte contains dense signals, with most features differential and the LFC mode located among non-null effects. As expected, Sparse SSRV was less powerful but maintained moderate FDR through its high-variance scale model. Methods optimized for dense signals (DESeq2, edgeR, baySeq, ANCOM-BC) achieved higher power but at the expense of inflated FDR in sparse systems.

\vspace{.1in}
\noindent
These results highlight a key strength of Sparse SSRV: its performance is both robust and interpretable. Unlike many other methods, it rests on a clearly \emph{diagnosable assumption}--the sparse system condition--which can be evaluated directly by inspecting the estimated LFC density. 
When this assumption holds, Sparse SSRV attains strong power with tight Type~I error control; when it fails, the model’s built-in uncertainty yields conservative, well-calibrated inference. 
This transparency makes Sparse SSRV a reliable tool for differential abundance analysis across diverse settings.

\section{Discussion}
\label{sec:discussion}
In this paper, we introduced \textit{Sparse SSRV}, a sparse Bayesian Partially Identified Model (PIM) that extends the Scale-Reliant Inference (SRI) framework to settings where only a subset of features change between conditions. Unlike normalization-based methods and recent sparse methods, which achieve \textit{artificial identifiability} by fixing scale to a single (often implausible) value, Sparse SSRV provides a more flexible and biologically plausible treatment of sparsity. By explicitly modeling uncertainty in sparsity, it captures a wide spectrum of sparsity degrees and, in doing so, also accounts for uncertainty in scale. Propagating this uncertainty through calculations, Sparse SSRV provides a principled alternative to approaches that have been shown to fail when their assumptions are even modestly violated.

Our framework formalizes sparsity through the notion of a \emph{sparse system}, where the mode of the log-fold change (LFC) distribution corresponds to unchanged features. This definition is weaker and more biologically plausible than traditional qualitative sparsity assumptions that discuss ``most taxa being unchanged'', and it can be easily diagnosed empirically by inspecting the estimated LFC density. Within this framework, we introduced a mode-estimating scale model and proved that the resulting Sparse SSRV is consistent under mild conditions. 

Extensive simulations and real-data analyses demonstrate that Sparse SSRV performs reliably across diverse settings. When the sparse system assumption holds, it achieves lower false discovery rates (FDR) and higher power than competing methods, including other Bayesian and penalized sparse models. Even when the assumption is violated—such as in dense systems—Sparse SSRV fails gracefully, maintaining conservative FDR control rather than producing spurious discoveries. This transparency and robustness make Sparse SSRV a practical and interpretable tool for applied researchers. Moreover, the method is computationally efficient, scaling well to high-dimensional sequence count data. 
While differential abundance in microbiome studies was employed as our ``running example'' throughout the paper, Sparse SSRV is immediately applicable to differential expression in gene transcription studies (see Introduction), and could be a valuable tool in a variety of other applications.

Despite these strengths, Sparse SSRVs have limitations. In particular, they may exhibit reduced power when the LFCs of null features are highly variable, resulting in a diffuse or multimodal distribution. In such cases, the mode may not clearly correspond to the set of unchanged features, which undermines the sparse system assumption. One possible remedy is to incorporate external information—such as known control features, technical spike-ins, or phylogenetic structure—to better regularize the mode estimate. Another strategy is to modify the scale model to explicitly allow for multimodality, for instance by averaging over a mixture of local modes or by upweighting the central region of the LFC distribution.

Additionally, our development focused on binary condition variables. However, many biological studies involve multiple conditions or continuous exposures. Extending Sparse SSRV to a regression framework would allow the mode-estimating scale model to generalize naturally to settings with covariates. In future work, we also plan to incorporate hierarchical structures (e.g., subject-level random effects) and temporal or spatial
correlations. These extensions would enable robust and interpretable differential analysis in more complex experimental designs, while retaining the core principle of propagating uncertainty in scale and sparsity.

Overall, Sparse SSRV unifies and improves upon two previously disconnected lines of work: normalization-based SRI methods and sparse penalized estimators. By embracing, rather than concealing, the uncertainty inherent in both scale and sparsity, it provides a robust and interpretable framework for differential analysis under scale uncertainty.


\backmatter

\bibliographystyle{biom} 
\bibliography{references}

\label{lastpage}

\setcounter{section}{0}
\renewcommand{\thesection}{Supplementary Section \arabic{section}}

\section{Rank-1 Update for General Log-linear Estimands}\label{sec:general-rank-1} 
Section~\ref{sec:sri-review} introduces the rank-1 update formulation for log-fold-changes (LFCs). However, this formulation can be extended to other estimands, even beyond the context of differential abundance analysis. In fact, any estimand that is a linear functional of log abundances (\(W\)) can be written as a rank-1 update, as it can be decomposed with respect to composition and scale:
\setlength{\abovedisplayskip}{5pt plus 1pt minus 1pt}
\setlength{\belowdisplayskip}{5pt plus 1pt minus 1pt}
\begin{equation}
    \theta = \varphi[f(W)] = \varphi[f(W^{||})+f(\mathbf{1}_DW^{\perp})]=\underbrace{\varphi[f(W^{||})]}_{\theta^{||}}+\underbrace{\varphi[f(\mathbf{1}_DW^{\perp})]}_{\theta^\perp\mathbf{1}_D}
    \label{eq:general_rank-1-update}
\end{equation}
where \(\varphi\) is an arbitrary linear functional and \(f\) is any logarithmic map (i.e., any map such that \(f(ab)=f(a)+f(b)\)). While our main text focuses on LFC estimation, this simple manipulation shows that the same theory and methods can be extended to any linear estimands, including the least squares estimator in regression models. An interesting example of this rank-1 update framework would be function smoothing. Consider a basis function regression where the estimand of interest is the vector of smoothing coefficients \(\mu_{d}\) of a sparsely and noisily observed function \(\log W_{d.}= (\log W_{d1}, \dots, \log W_{dN})\) expressed with respect to a basis expansion $\log W_{dn} = \sum_{j=1}^J \mu_{dj} e_j(n) + \epsilon_{n}$. Let $E$ denote the basis matrix, with entries $E_{nj} = e_j(x_n)$. Then the ordinary least squares estimator is 
 \begin{equation*}
     \hat \mu_d = (E^\top E)^{-1} E ^\top \log W_{d \cdot} = \hat \mu^{||}_d + \hat \mu^\perp \mathbf{1_D},
 \end{equation*}
where $\hat \mu^{||}_d = (E^\top E)^{-1} E ^\top \log W^{||}_{d\cdot} \text{ and } \hat \mu ^\perp = (E^\top E)^{-1} E ^\top \log W^{\perp},$ allowing rank-1 update of $\hat \mu$ via $\hat \mu ^\perp$. This rank-1 update framework also extends naturally to penalized variants, such as smoothing splines and ridge regressions. Thus, while our primary focus is on LFCs, the rank-1 update formulation is applicable to a much broader class of estimands, suggesting that the method proposed in the main article can be extended to a variety of analyses beyond differential abundance.

\section{Example—ALDEx2 as a Special Case of SSRVs}\label{sec:aldex-example}
To illustrate the concept of Scale Simulation Random Variables (SSRVs), we revisit the ALDEx2 procedure~\citep{fernandes2014unifying}, a widely used method for differential abundance analysis. ALDEx2 can be viewed as a degenerate instance of an SSRV in which the unidentifiable scale parameter \(\theta^{\perp}\) is fixed deterministically rather than modeled stochastically.

\paragraph{Measurement model}
ALDEx2 begins by estimating compositional uncertainty through independent
Dirichlet--multinomial models:
\[
W^{\parallel}_{\cdot n}\mid Y_{\cdot n} \sim 
\text{Dirichlet}(Y_{\cdot n}+\alpha\mathbf{1}_D),
\qquad \alpha = 0.5.
\]
Each posterior draw \(W^{\parallel}_{\cdot n}\) represents a plausible composition for sample
\(n\), conditional on its observed counts \(Y_{\cdot n}\).

\paragraph{Implicit scale assumption (CLR normalization)}
ALDEx2 then applies a centered log-ratio (CLR) transformation,
\[
\log \tilde{W}_{\cdot n} = 
\log W^{\parallel}_{\cdot n} - 
\phi(W^{\parallel}_{\cdot n}), \qquad 
\phi(W^{\parallel}_{\cdot n}) = \frac{1}{D}\sum_{d=1}^{D}\log W^{\parallel}_{dn}.
\]
This step implicitly assumes a deterministic relationship between composition and scale:
\[
\log W^{\perp}_{n} = -\phi(W^{\parallel}_{\cdot n}),
\]
and therefore fixes the scale component of the log-fold change to
\[
\theta^{\perp}_{\mathrm{CLR}}
= -\mean_{n:x_n=1}\phi(\log W^{\parallel}_{\cdot n})
  +\mean_{n:x_n=0}\phi(\log W^{\parallel}_{\cdot n}).
\]
Under this \emph{CLR assumption}, ALDEx2 corresponds to an SSRV with a degenerate scale model \(p(\theta^{\perp}\mid \theta^{\parallel}) = \delta_{\theta^{\perp}_{\mathrm{CLR}}}\).

\paragraph{Stochastic scale model}
Replacing the fixed normalization with a stochastic model for total load yields a \emph{proper} SSRV:
\[
\log W^{\perp}_{n}\sim 
N\!\left(-\phi(\log W^{\parallel}_{\cdot n}),\,\gamma^{2}\right),
\qquad
\log W_{\cdot n} = 
\log W^{\parallel}_{\cdot n} + 
\log W^{\perp}_{n}\,\mathbf{1}_{D}.
\]
Here, \(\gamma^{2}\) quantifies prior uncertainty about scale differences across samples. Sampling \(W^{\perp}_{n}\) from this model propagates scale uncertainty through to the estimand \(\theta\), complementing the compositional uncertainty in \(W^{\parallel}\).

\paragraph{Empirical implications}
In practice, this stochastic extension preserves the computational simplicity of ALDEx2 while yielding posteriors \(p(\theta\mid Y)\) that better reflect the information content of the data. Simulations and reanalyses of real datasets have shown that even modest variance in the scale prior (\(\gamma^{2}\approx 0.25\)) can reduce false discovery rates by more than half without sacrificing power~\citep{nixon2024a,nixon2024b,mcgovern2024intervals}.

\smallskip
\noindent
This example illustrates how deterministic normalizations correspond to degenerate SSRVs, and how introducing stochastic scale models restores proper Bayesian treatment of non-identifiable scale parameters.

\section{Pitfall of Sum-to-zero Constraint}
\label{sec:ckf-sim}
Section~\ref{sec:pitfall-normalized-sparsity} of the main article discusses how the sum-to-zero constraint implies symmetry between gains and losses in abundance across taxa. To quantify the impact of this unrealistic assumption, we conducted a simulation study using synthetic sparse datasets and analyzed them with the Compositional Knockoff Filter (CKF), which is based on Equation~\eqref{eq:penalized-estimator}~\citep{srinivasan2020knockoff}. 

We used SparseDOSSA2~\citep{ma2021simulation} to generate synthetic datasets consisting of 400 features and 100 samples, with 20 features (5\%) assigned nonzero effect sizes. These effect sizes were randomly drawn from the distribution $5\times beta(1,3)+1$, resulting in values ranging from 1 to 6. Additionally, the sequencing depth is set to $2\cdot 10^7$ so these effect sizes are sufficiently realized in the simulated data.  We compared the performance of CKF under varying ratios of negative to positive effect sizes: 10:10 (Symmetric), 4:16 (80\% Positive), or 0:20 (All Positive). For each scenario, we simulated 1000 datasets and analyzed each twice using CKF with two different target FDR levels (0.05 and 0.1) and also Sparse SSRV. \ref{tab:ckf_fdr_power_tab_0.05} displays the average empirical FDR and power achieved by CKF and Sparse SSRV under each scenario.

\begin{table}[!ht]
\centering
\begin{tabular}[t]{l|l|l|l}
\hline
  & Symmetric & 80\% Positive & All Positive\\
\hline
CKF target FDR 0.1 & 0.07 / 0.23 & 0.12 / 0.21 & 0.21 / 0.18\\
\hline
CKF target FDR 0.05 & 0.04 / 0.21 & 0.09 / 0.19 & 0.16 / 0.17\\
\hline
Sparse SSRV & 0.03 / 0.85 & 0.03 / 0.86 & 0.03 / 0.87\\
\hline
\end{tabular}
\caption{Summary of empirical FDR and power across methods and scenarios. Each cell reports the average FDR and power, separated by a slash (FDR / Power), computed over 1000 simulations.}
\label{tab:ckf_fdr_power_tab_0.05}
\end{table}

\ref{tab:ckf_fdr_power_tab_0.05} shows that the performance of CKF deteriorates as the sum-to-zero constraint is increasingly violated. When the target FDR is 0.05, CKF achieves an empirical FDR of 0.04 in the Symmetric case. However, this inflates to 0.09 in the 80\% Positive case and rises further to 0.16 in the All Positive case. Notably, despite the increase in FDR, power decreases as the ratio of positive and negative effect sizes become more asymmetric. In contrast, Sparse SSRV not only show a much higher power than CKF, but also exhibits stable FDR control and power across all cases. 

\begin{table}[!ht]
\centering
\begin{tabular}[t]{l|l|l|l}
\hline
  & Symmetric & 80\% Positive & All Positive\\
\hline
CKF target FDR 0.1 & 0.02 / 0.07 & 0.06 / 0.06 & 0.21 / 0.05\\
\hline
CKF target FDR 0.05 & 0.01 / 0.06 & 0.04 / 0.05 & 0.19 / 0.04\\
\hline
Sparse SSRV & 0.02 / 0.90 & 0.02 / 0.90 & 0.02 / 0.91\\
\hline
\end{tabular}
\caption{Summary of empirical FDR and power across methods and scenarios. Each cell reports the average FDR and power, separated by a slash (FDR / Power), computed over 1000 simulations.}
\label{tab:ckf_fdr_power_tab_0.20}
\end{table}

We repeated the simulation with a more moderate degree of sparsity, where 20\% features were assigned nonzero effect sizes. \ref{tab:ckf_fdr_power_tab_0.20} shows a similar pattern as \ref{tab:ckf_fdr_power_tab_0.05}: FDR increases as the effect sizes become more asymmetric. However, CKF achieved substantially lower power compared to the previous simulation with more extreme sparsity. In particular, under the `All Positive' scenario, CKF exhibited a similar FDR to the previous simulation but with much lower power, indicating that the model is more biased in the setting with moderate sparsity.

Our results show that even modest violations of the sum-to-zero assumption lead to severe inflation of false discovery rates, supporting our theoretical argument in the main article. Furthermore, these simple experiments reveal how the performance of a method can depend on the degree of sparsity it implicitly assumes. In both sparsity settings, most of the entities remain unchanged. However, CKF achieved substantially lower power in the second simulation with moderate sparsity, while still exhibiting a similar degree of inflated FDR under severe violation of the sum-to-zero constraint. These results show that CKF performs well only within a certain range of sparsity, not necessarily covering all contexts that are intuitively sparse. 


\section{Lemmas and Proofs}
\label{sec:lemmas-proofs}

\begin{lemma} \label{lem:consistency_W}
    Assume a likelihood \(Y_{\cdot n} \sim \text{Multinomial}(\lambda_{n}, W^{\parallel}_{\cdot n})\) and prior $W^{\parallel}_{\cdot n} \sim \mathrm{Dir}(\alpha \boldsymbol{1_D})$ where \(0<\alpha<\infty\). Let \(\lambda = \inf_{n \in \{1, \dots, N\}} \lambda_{n}\) denote the minimum sequencing depth over the \(N\) samples. Then, 
    \begin{equation}
        W^{\parallel}_{dn}|Y_{\cdot n} \xrightarrow{p} W^{\parallel,*}_{dn} \hspace{0.5em}\text{as} \hspace{0.5em} \lambda \rightarrow \infty
    \end{equation}
    for any \(d \in \{1,2,\dots,D\}\) and where \(W^{\parallel,*}_{dn}\) denotes the true value of \(W^{\parallel}_{dn}\).
\end{lemma}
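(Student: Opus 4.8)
The plan is to exploit multinomial–Dirichlet conjugacy and reduce the claim to two routine facts: the posterior mean converges to the truth, and the posterior variance vanishes. First I would note that conjugacy gives, for each sample \(n\), the exact posterior \(W^{\parallel}_{\cdot n}\mid Y_{\cdot n}\sim\mathrm{Dir}(Y_{\cdot n}+\alpha\mathbf{1}_D)\), so that for each coordinate \(d\) the posterior mean is \(\mathbb{E}[W^{\parallel}_{dn}\mid Y_{\cdot n}]=(Y_{dn}+\alpha)/(\lambda_n+D\alpha)\) and the posterior variance has the standard Dirichlet form. Since \(\lambda_n\ge\lambda\) for every \(n\), it suffices to establish the limit coordinatewise as \(\lambda_n\to\infty\).

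Next I would control the posterior spread. Using the closed-form Dirichlet variance and the elementary bound \(a_d(a_0-a_d)/a_0^2\le 1/4\) (with \(a_d=Y_{dn}+\alpha\) and \(a_0=\lambda_n+D\alpha\)), the posterior variance satisfies \(\mathrm{Var}(W^{\parallel}_{dn}\mid Y_{\cdot n})\le 1/\{4(\lambda_n+D\alpha+1)\}\le 1/(4\lambda)\), a deterministic bound that tends to \(0\) as \(\lambda\to\infty\). This shows the posterior collapses to a point mass, uniformly in the data.

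I would then identify that point with the truth. Marginally under the data-generating model, \(Y_{dn}\sim\mathrm{Binomial}(\lambda_n,W^{\parallel,*}_{dn})\), so the weak law of large numbers gives \(Y_{dn}/\lambda_n\xrightarrow{p}W^{\parallel,*}_{dn}\) as \(\lambda_n\to\infty\); since the prior corrections \(\alpha/\lambda_n\) and \(D\alpha/\lambda_n\) vanish, the posterior mean converges in probability to \(W^{\parallel,*}_{dn}\).

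Finally I would combine these on the joint probability space of the data \(Y\) and a posterior draw \(\vartheta:=W^{\parallel}_{dn}\), using the variance–bias decomposition \(\mathbb{E}[(\vartheta-W^{\parallel,*}_{dn})^2]=\mathbb{E}_Y[\mathrm{Var}(\vartheta\mid Y)]+\mathbb{E}_Y[(\mathbb{E}[\vartheta\mid Y]-W^{\parallel,*}_{dn})^2]\). The first term is bounded by \(1/(4\lambda)\to 0\), and the integrand of the second is bounded (everything lives in \([0,1]\)) and tends to \(0\) in probability, so dominated convergence sends it to \(0\) as well; \(L^2\) convergence then yields the claimed convergence in probability. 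The main delicacy---really the only subtle point---is keeping the two sources of randomness (the sampling of \(Y\) and the posterior draw) separate and passing to the limit inside the expectation; boundedness on the simplex makes the dominated-convergence step immediate, so no serious obstacle remains.
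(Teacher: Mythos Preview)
Your proposal is correct and follows essentially the same route as the paper: exploit multinomial--Dirichlet conjugacy, show the posterior mean converges to the truth via the law of large numbers while the posterior variance is $O(1/\lambda)$, then combine. The only cosmetic difference is that you finish with an $L^2$/dominated-convergence argument on the joint space, whereas the paper applies Chebyshev's inequality directly; your treatment of the two sources of randomness is in fact slightly tidier than the paper's.
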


\begin{proof}
Conjugacy gives the posterior
\[
W^{\parallel}_{dn}\mid Y_{\cdot n}\sim \text{Dirichlet}(\alpha+Y_{\cdot n}),
\]
with mean and variance
\[
E[W^{\parallel}_{dn}\mid Y_{\cdot n}]=\frac{\alpha+Y_{dn}}{D\alpha+\lambda_n},
\qquad
\text{Var}(W^{\parallel}_{dn}\mid Y_{\cdot n})
=\frac{(\alpha+Y_{dn})(\lambda_n+D\alpha-(\alpha+Y_{dn}))}{(D\alpha+\lambda_n)^2(D\alpha+\lambda_n+1)}.
\]
By the law of large numbers for multinomials, \(Y_{dn}/\lambda_n\xrightarrow{\text{a.s.}}W^{\parallel,*}_{dn}\), so
\[
E[W^{\parallel}_{dn}\mid Y_{\cdot n}] \to W^{\parallel,*}_{dn}.
\]
The variance satisfies \(\text{Var}(W^{\parallel}_{dn}\mid Y_{\cdot n})=O(1/\lambda_n)\). By Chebyshev’s inequality,
\[
P\bigl(|W^{\parallel}_{dn}\mid Y_{\cdot n}-W^{\parallel,*}_{dn}|>\varepsilon\bigr)
\le \frac{\text{Var}(W^{\parallel}_{dn}\mid Y_{\cdot n})}{\varepsilon^2}
\to 0,
\]
establishing convergence in probability.
\end{proof}

Lemma~\ref{lem:consistency_W} ensures posterior samples $W^{\parallel}_{dn}|Y_{\cdot n}$ contract around the true $W^{\parallel}_{dn}$ as $\lambda$ increases. Then, by continuous mapping theorem, we immediately get the following result. 

\begin{lemma} \label{lem:consistency_theta_pll}
    Define a randomized estimator \(\hat{\theta}^{\parallel}=(\hat{\theta}^{\parallel}_{1},\dots ,\hat{\theta}^{\parallel}_{D})\) as the posterior \(p(W^{\parallel}\mid  Y)\) defined above transformed via :
    \begin{equation}
    \hat{\theta}^{\parallel}_{d} = \underset{n:x_{n}=1} {\mathrm{mean}}\log (W^{\parallel}_{dn}|Y_{\cdot n}) - \underset{n:x_{n}=0} {\mathrm{mean}} \log (W^{\parallel}_{dn}|Y_{\cdot n}).
    \end{equation}
    Then
    \begin{equation}
        \hat{\theta}^{\parallel}_{d} \xrightarrow{p} \theta^{\parallel}_{d} \hspace{0.5em} \mathrm{as} \hspace{0.5em} \lambda \rightarrow \infty
    \end{equation}
    for every \(d \in \{1,2,...,D\}\).
\end{lemma}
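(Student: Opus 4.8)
The plan is to derive this directly from Lemma~\ref{lem:consistency_W} via the continuous mapping theorem, exactly as anticipated in the remark preceding the statement. Fix a feature index $d$ and collect the $N$ posterior coordinates into the random vector $W^{\parallel}_{d\cdot}\mid Y = (W^{\parallel}_{d1}\mid Y_{\cdot 1},\dots,W^{\parallel}_{dN}\mid Y_{\cdot N})$. Since $\lambda=\inf_n \lambda_n$, letting $\lambda\to\infty$ forces every $\lambda_n\to\infty$, so Lemma~\ref{lem:consistency_W} applies to each coordinate simultaneously: $W^{\parallel}_{dn}\mid Y_{\cdot n}\xrightarrow{p} W^{\parallel,*}_{dn}$ for every $n$.

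First I would upgrade this coordinatewise convergence in probability to joint convergence of the finite-dimensional vector $W^{\parallel}_{d\cdot}\mid Y$ to the deterministic limit $W^{\parallel,*}_{d\cdot}$; this is automatic because convergence in probability of each of a fixed, finite number of coordinates implies convergence in probability of the stacked vector (e.g.\ via a union bound on the events $\{|W^{\parallel}_{dn}\mid Y_{\cdot n}-W^{\parallel,*}_{dn}|>\varepsilon\}$). Next I would write the estimator as $\hat{\theta}^{\parallel}_d = h(W^{\parallel}_{d\cdot}\mid Y)$, where $h:(0,\infty)^N\to\mathbb{R}$ is the fixed map $h(w_1,\dots,w_N)=\mean_{n:x_n=1}\log w_n-\mean_{n:x_n=0}\log w_n$. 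The map $h$ is continuous on $(0,\infty)^N$, being a finite linear combination of the continuous function $\log$. Applying the continuous mapping theorem to the jointly convergent vector then yields $\hat{\theta}^{\parallel}_d\xrightarrow{p} h(W^{\parallel,*}_{d\cdot})=\theta^{\parallel}_d$, which is the claim.

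The only points requiring care---rather than a genuine obstacle---are the domain conditions for $\log$. The posterior $W^{\parallel}_{\cdot n}\mid Y_{\cdot n}\sim\mathrm{Dir}(\alpha+Y_{\cdot n})$ with $\alpha>0$ places all mass on the open simplex, so every draw is strictly positive and $\log$ is well defined almost surely; and the true composition $W^{\parallel,*}_{dn}$ is assumed strictly positive (otherwise the LFC estimand itself is undefined), so $h$ is continuous at the limit point $W^{\parallel,*}_{d\cdot}$. Because $N$ is fixed and each of the index sets $\{n:x_n=1\}$ and $\{n:x_n=0\}$ is finite, the group means are ordinary finite averages and no uniformity in $n$ is needed; this is precisely why the result is ``immediate'' once Lemma~\ref{lem:consistency_W} is in hand.
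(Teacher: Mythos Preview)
Your proposal is correct and follows exactly the paper's approach: invoke Lemma~\ref{lem:consistency_W} and apply the continuous mapping theorem to the continuous map $W^{\parallel}_{d\cdot}\mapsto\hat{\theta}^{\parallel}_d$. Your version is simply more explicit than the paper's one-line proof, spelling out the joint convergence of the finite vector and the domain check for $\log$---welcome care, but not a different argument.
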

\begin{proof}
The mapping \(W^{\parallel}_{d} \mapsto \hat{\theta}^{\parallel}_{d}\) is continuous. Lemma~\ref{lem:consistency_W} and the continuous mapping theorem give the result.
\end{proof}

Using Lemma~\ref{lem:consistency_theta_pll}, we show that the 
Parzen mode estimator-based scale model produces a consistent Bayes estimator for $\theta^{\perp}$.

\begin{lemma}\label{lem:consistency_mode}
Assume $g$, the true density of $\theta^{\parallel}_d$, is uniformly continuous with a unique mode. 
Let $p_D(t;\hat{\theta}^{\parallel})$ be Parzen’s kernel density estimator with Gaussian kernel and bandwidth $h(D)$ satisfying regularity conditions : $\lim_{D\to\infty} h(D)=0,\quad \lim_{D\to\infty} D\,h(D)^2=\infty.$
Define the Bayes estimator of the scale parameter as the posterior mean:
\[
\bar{\theta}^{\perp} \;=\; \mathbb{E}[\theta^{\perp}\mid Y],
\text{ with } \mathbb{E}[\theta^{\perp}\mid\theta^{\parallel}]
= -\,\argmax_{t\in[t_l,t_u]} p_D(t;\theta^{\parallel})
\]
for any suitably large interval \([t_{u}, t_{u}]\) containing \(\theta^{\perp}\). 
Then, as $\lambda\to\infty$ and subsequently $D\to\infty$,
\[
\bar{\theta}^{\perp}
\;\xrightarrow{p}\;
-\argmax_{t\in[t_l,t_u]} g(t)
\;=\;\theta^{\perp,*}
\]
where \(\theta^{\perp,*}\) denotes the true value of \(\theta^{\perp}\).
\end{lemma}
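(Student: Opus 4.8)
The plan is to reduce the claim to two well-understood ingredients: posterior concentration of the compositional log-fold changes, already supplied by Lemmas~\ref{lem:consistency_W} and~\ref{lem:consistency_theta_pll}, and the classical consistency of Parzen's kernel mode estimator. Write $m^{*}=\argmax_{t\in[t_l,t_u]} g(t)$, so that $\theta^{\perp,*}=-m^{*}$, and let $\hat m_D(\theta^{\parallel})=\argmax_{t\in[t_l,t_u]} p_D(t;\theta^{\parallel})$ denote the mode of the kernel estimate built from a posterior draw $\theta^{\parallel}$; the fixed true vector is $\theta^{\parallel,*}=(\theta^{\parallel,*}_1,\dots,\theta^{\parallel,*}_D)$, whose entries are iid draws from $g$. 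The tower property gives $\bar\theta^{\perp}=\mathbb{E}[\theta^{\perp}\mid Y]=-\,\mathbb{E}_{\theta^{\parallel}\mid Y}[\hat m_D(\theta^{\parallel})]$, and the triangle inequality together with Jensen yields $|\bar\theta^{\perp}-\theta^{\perp,*}|\le \mathbb{E}_{\theta^{\parallel}\mid Y}\big[|\hat m_D(\theta^{\parallel})-m^{*}|\big]$. It therefore suffices to make the mode discrepancy $|\hat m_D(\theta^{\parallel})-m^{*}|$ small, uniformly enough to survive the posterior averaging.

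First I would establish the uniform-closeness bound
\[
\sup_{t\in[t_l,t_u]}|p_D(t;\theta^{\parallel})-g(t)|
\;\le\;
\underbrace{\sup_t|p_D(t;\theta^{\parallel})-p_D(t;\theta^{\parallel,*})|}_{A_\lambda}
\;+\;
\underbrace{\sup_t|p_D(t;\theta^{\parallel,*})-g(t)|}_{B_D}.
\]
The term $A_\lambda$ is controlled by smoothness of the Gaussian kernel: since $K$ is Lipschitz, $A_\lambda\le L\,h(D)^{-2}\,D^{-1}\sum_{d=1}^{D}|\theta^{\parallel}_d-\theta^{\parallel,*}_d|$, and for each \emph{fixed} $D$ Lemma~\ref{lem:consistency_theta_pll} forces the averaged discrepancy to $0$ in probability as $\lambda\to\infty$ (with $D$ fixed, $h(D)$ is a positive constant, so the $h(D)^{-2}$ factor is harmless). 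The term $B_D$ is exactly the uniform consistency of the kernel density estimator built from the $D$ iid draws $\theta^{\parallel,*}_1,\dots,\theta^{\parallel,*}_D\sim g$; under the stated bandwidth conditions $h(D)\to0$ and $D\,h(D)^{2}\to\infty$, Parzen's results~\citep{parzen1962estimation} give $B_D\xrightarrow{p}0$ as $D\to\infty$. Taking the iterated limit ($\lambda\to\infty$ with $D$ fixed, then $D\to\infty$) gives $\sup_t|p_D(t;\theta^{\parallel})-g(t)|\xrightarrow{p}0$.

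Next I would convert uniform closeness of the densities into closeness of their maximizers through an argmax argument. Because $g$ is uniformly continuous with a unique mode on the compact interval $[t_l,t_u]$, the maximizer $m^{*}$ is \emph{well-separated}: for every $\delta>0$ there is $\eta>0$ with $\sup_{t:\,|t-m^{*}|\ge\delta} g(t)\le g(m^{*})-\eta$, where the supremum is attained by compactness and is strictly below $g(m^{*})$ by uniqueness. A standard near-maximizer argument then shows that $\sup_t|p_D(t;\theta^{\parallel})-g(t)|<\eta/2$ implies $|\hat m_D(\theta^{\parallel})-m^{*}|<\delta$. Combining this with the previous step, the probability that $|\hat m_D(\theta^{\parallel})-m^{*}|\ge\delta$ vanishes in the iterated limit. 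Finally, since every $\hat m_D(\theta^{\parallel})$ lies in the bounded set $[t_l,t_u]$, I would pass this control through the posterior expectation by bounded convergence, obtaining $\mathbb{E}_{\theta^{\parallel}\mid Y}[|\hat m_D(\theta^{\parallel})-m^{*}|]\xrightarrow{p}0$ and hence $\bar\theta^{\perp}\xrightarrow{p}-m^{*}=\theta^{\perp,*}$.

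The main obstacle, I expect, is the argmax step rather than either convergence statement in isolation. The map $\theta^{\parallel}\mapsto\hat m_D(\theta^{\parallel})$ is \emph{not} continuous at fixed finite $D$, since a kernel estimate from finitely many points may be multimodal or have near-ties; one therefore cannot appeal to the continuous mapping theorem at fixed $D$, and comparing $\hat m_D(\theta^{\parallel})$ to an intermediate target $\hat m_D(\theta^{\parallel,*})$ is fragile. The resolution is to compare directly to the population mode $m^{*}$, where the well-separation supplied by the unique-mode assumption makes the argmax a continuous functional of the density in the uniform topology, bypassing the finite-$D$ irregularities entirely. One must also keep the two asymptotic regimes in the correct order, as the Lipschitz bound on $A_\lambda$ only tolerates the $h(D)^{-2}$ factor because $D$ is held fixed while $\lambda\to\infty$.
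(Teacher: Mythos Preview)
Your argument is correct, and it takes a genuinely different route from the paper's own proof. The paper argues in two hops: first, at fixed $D$, it invokes Berge's Maximum theorem to claim that $\eta\mapsto\argmax_{t}p_D(t;\eta)$ is continuous (assuming the maximizer is unique), applies the continuous mapping theorem to Lemma~\ref{lem:consistency_theta_pll} to send the posterior draw to $\argmax_t p_D(t;\theta^{\parallel,*})$, and then lets $D\to\infty$ via Parzen's uniform consistency. Your decomposition into $A_\lambda+B_D$ and your well-separated-argmax argument instead compare $p_D(\cdot;\theta^{\parallel})$ \emph{directly} to $g$, bypassing the intermediate target $p_D(\cdot;\theta^{\parallel,*})$ entirely. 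This buys you exactly the robustness you flag in your last paragraph: you never need the argmax of the finite-$D$ KDE to be unique or to vary continuously in $\eta$, an assumption the paper asserts but does not verify (and which can fail on measure-zero configurations of a Gaussian mixture). You also handle the posterior averaging more explicitly, via the tower property and bounded convergence, whereas the paper's final line identifies $\bar\theta^{\perp}$ with the functional evaluated at a single posterior draw, which is somewhat loose. The cost of your approach is a slightly longer argument and the need to track the Lipschitz bound on $A_\lambda$; the paper's version is terser but leans on the Maximum theorem's uniqueness hypothesis.
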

\begin{proof}
First, by Lemma~\ref{lem:consistency_W} and Lemma~\ref{lem:consistency_theta_pll}, 
\(\hat{\theta}^{\parallel}\xrightarrow{p}\theta^{\parallel}\) as $\lambda\to\infty$ for each fixed $D$.

Because $p_D(t;\eta)$ is jointly continuous in $(t,\eta)$ and has a unique maximizer, the Maximum theorem implies the functional 
\[
\phi(\eta)\equiv -\argmax_{t\in[t_l,t_u]} p_D(t;\eta)
\]
is continuous. Thus, by the continuous mapping theorem,
\[
\phi(\hat{\theta}^{\parallel})
\xrightarrow[\lambda\to\infty]{p}
-\argmax_{t\in[t_l,t_u]} p_D(t;\theta^{\parallel}).
\]
Finally, by Theorem 3A of Parzen~\cite{parzen1962estimation}, $p_D(t;\theta^{\parallel})\to g(t)$ uniformly as $D\to\infty$, implying
\[
-\argmax_{t\in[t_l,t_u]} p_D(t;\theta^{\parallel})
\to -\argmax_{t\in[t_l,t_u]} g(t)
= \theta^{\perp}.
\]
Since $\mathbb{E}[\theta^{\perp}\mid Y]$ equals the deterministic functional $\phi(\hat{\theta}^{\parallel})$ by construction, the result follows.
\end{proof}

Lemma~\ref{lem:consistency_mode} states that the posterior mean of the scale model integrated over the measurement model \(p(\theta^{\perp} \mid  Y)=\int p(\theta^{\perp} \mid  \theta^{\parallel})p(\theta^{\parallel} \mid  Y)\) is a consistent estimator for \(\theta^{\perp}\). Under the mild additional assumption that the variance of the scale model shrinks asymptotically, then this result is strengthened and the entire posterior concentrates about the true value \(\theta^{\perp,*}\).

\begin{corollary}\label{cor:posterior_consistency_theta_perp}
Under the conditions of Lemma~\ref{lem:consistency_mode}, 
suppose further that the posterior variance of the scale model vanishes:
\[
\mathrm{Var}(\theta^{\perp}\mid\theta^{\parallel}) \xrightarrow[\lambda\to\infty,\;D\to\infty]{} 0.
\]
Then the posterior for $\theta^{\perp}$ is consistent:
\[
\theta^{\perp}\mid Y \xrightarrow{p} \theta^{\perp}.
\]
\end{corollary}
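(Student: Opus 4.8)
The plan is to reduce the claimed posterior consistency to a statement about the posterior mean-squared deviation of $\theta^{\perp}$ about its true value, and then to control that quantity by combining the consistency of the posterior mean (Lemma~\ref{lem:consistency_mode}) with the vanishing-variance hypothesis. Concretely, for any $\varepsilon>0$ a conditional Markov inequality applied to the nonnegative random variable $(\theta^{\perp}-\theta^{\perp,*})^2$ gives
\[
\Pi\!\left(|\theta^{\perp}-\theta^{\perp,*}|>\varepsilon \mid Y\right)
\le \varepsilon^{-2}\,\mathbb{E}\!\left[(\theta^{\perp}-\theta^{\perp,*})^2\mid Y\right],
\]
so it suffices to show that the right-hand side tends to $0$ in (data) probability as $\lambda\to\infty$ and then $D\to\infty$.

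First I would peel off the measurement-model layer by iterated expectation over $\theta^{\parallel}$. Writing $\phi(\theta^{\parallel})=\mathbb{E}[\theta^{\perp}\mid\theta^{\parallel}]=-\argmax_{t\in[t_l,t_u]}p_D(t;\theta^{\parallel})$ for the conditional mean, the bias--variance identity applied conditionally on $\theta^{\parallel}$ (with fixed target $\theta^{\perp,*}$) yields
\[
\mathbb{E}\!\left[(\theta^{\perp}-\theta^{\perp,*})^2\mid Y\right]
= \mathbb{E}_{\theta^{\parallel}\mid Y}\!\left[\mathrm{Var}(\theta^{\perp}\mid\theta^{\parallel})\right]
+ \mathbb{E}_{\theta^{\parallel}\mid Y}\!\left[(\phi(\theta^{\parallel})-\theta^{\perp,*})^2\right].
\]
This cleanly separates the scale-model variance from the measurement-model spread, so the two sources of uncertainty can be treated independently.

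The second term is the easier one: the conditional mean $\phi(\theta^{\parallel})$ takes values in the compact interval $[-t_u,-t_l]$, hence is uniformly bounded, and the argument of Lemma~\ref{lem:consistency_mode}---Lemma~\ref{lem:consistency_W} together with the continuous mapping theorem, followed by Parzen's uniform convergence---shows $\phi(\theta^{\parallel})\xrightarrow{p}\theta^{\perp,*}$ under the posterior draw of $\theta^{\parallel}$. A bounded random variable that converges in probability to a constant also converges in $L^2$, so $\mathbb{E}_{\theta^{\parallel}\mid Y}[(\phi(\theta^{\parallel})-\theta^{\perp,*})^2]\to 0$. The first term is then handled by the corollary's hypothesis $\mathrm{Var}(\theta^{\perp}\mid\theta^{\parallel})\to 0$, and combining both vanishing terms with the Markov bound above completes the argument.

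The main obstacle I anticipate is making the first term rigorous: the hypothesis supplies only pointwise (in $\theta^{\parallel}$) decay of the conditional variance, whereas I need the posterior \emph{expectation} $\mathbb{E}_{\theta^{\parallel}\mid Y}[\mathrm{Var}(\theta^{\perp}\mid\theta^{\parallel})]$ to vanish. Interchanging the limit with this expectation requires a domination or uniform-integrability argument---equivalently, uniform rather than pointwise decay of the conditional variance over the support of the $\theta^{\parallel}$-posterior. This is exactly the mild regularity that holds for the concrete bootstrap- and Laplace-based scale models of Section~\ref{sec:uncertainty}, where the conditional variance is governed by a deterministic function of $(N,D)$ essentially independent of $\theta^{\parallel}$, so bounded convergence applies. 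A secondary bookkeeping point is to respect the iterated limit order ($\lambda\to\infty$ first, then $D\to\infty$) used in the earlier lemmas, recalling that $\phi=\phi_D$ itself depends on $D$ through the kernel estimator $p_D$; this is why I would establish the $L^2$ convergence of the second term at fixed $D$ before sending $D\to\infty$.
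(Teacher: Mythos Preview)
Your approach is essentially the same as the paper's: a second-moment (Chebyshev/Markov) bound together with the law of total variance, using Lemma~\ref{lem:consistency_mode} for the bias term and the vanishing-variance hypothesis for the variance term. If anything your version is more careful---the paper writes the Chebyshev bound directly with $\mathrm{Var}(\theta^\perp\mid Y)$ without separating the bias $(\mathbb{E}[\theta^\perp\mid Y]-\theta^{\perp,*})^2$, and it silently treats the hypothesis $\mathrm{Var}(\theta^\perp\mid\theta^\parallel)\to 0$ as implying $\mathbb{E}_{\theta^\parallel\mid Y}[\mathrm{Var}(\theta^\perp\mid\theta^\parallel)]\to 0$, exactly the uniform-integrability issue you flag.
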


\begin{proof}
By Lemma~\ref{lem:consistency_mode}, 
$\mathbb{E}[\theta^{\perp}\mid Y]\xrightarrow{p}\theta^{\perp}$. 
Chebyshev’s inequality yields
\[
P(|\theta^{\perp}-\theta^{\perp,*}|>\varepsilon\mid Y)
\le\frac{\mathrm{Var}(\theta^{\perp}\mid Y)}{\varepsilon^2}
\to 0,
\]
since, by the law of total variance and the conditional independence \(\theta^{\perp} \perp Y \mid  \theta^{\parallel}\), $\mathrm{Var}(\theta^{\perp}\mid Y)
= \mathbb{E}[\mathrm{Var}(\theta^{\perp}\mid\theta^{\parallel})\mid Y]
+ \mathrm{Var}(\mathbb{E}[\theta^{\perp}\mid\theta^{\parallel}]\mid  Y)\to 0$.
\end{proof}

Finally, we prove the main theorem and corollary in the main text. 

\begin{proof}[Proof of Theorem~\ref{thm:full_consistency}]
The result follows directly from the supplementary lemmas.

Lemma~\ref{lem:consistency_W} establishes posterior consistency of the measurement model:
\(
W^{\parallel}_{dn}\mid Y_{\cdot n} \xrightarrow{p} W^{\parallel,*}_{dn}
\)
as $\lambda\to\infty$.  
Lemma~\ref{lem:consistency_theta_pll} extends this to log-fold changes:
\(
\hat{\theta}^{\parallel}_d \xrightarrow{p} \theta^{\parallel}_d
\)
as $\lambda\to\infty$.  

Lemma~\ref{lem:consistency_mode} then shows that, under the Parzen regularity conditions and sparsity assumption, the Bayes estimator of the scale parameter is consistent:
\(
\bar{\theta}^{\perp} \xrightarrow{p} \theta^{\perp,*}
\)
as $\lambda\to\infty$ and subsequently $D\to\infty$.  

Finally, since the Bayes estimator of the full SSRV is the continuous transformation
\(
\bar{\boldsymbol{\theta}}
= \hat{\theta}^{\parallel} + \bar{\theta}^{\perp}\mathbf{1}_D,
\)
the continuous mapping theorem yields
\begin{equation*}
    \bar{\boldsymbol{\theta}} \xrightarrow{p} \boldsymbol{\theta}^*
\quad \text{as } \lambda\to\infty,\; D\to\infty. 
\end{equation*}
\end{proof}
\begin{proof}[Proof of Corollary~\ref{cor:posterior_consistency_ssrv}]
Theorem~\ref{thm:full_consistency} establishes that the Bayes estimator is consistent:
\(
\mathbb{E}[\theta\mid Y] \xrightarrow{p} \theta^{*}.
\)
Chebyshev’s inequality implies
\[
P\!\left(|\theta_d - \theta^{*}_d| > \varepsilon \,\middle|\, Y\right)
\le \frac{\mathrm{Var}(\theta_d\mid Y)}{\varepsilon^2}.
\]
By the law of total variance and the SSRV conditional independence
\(\theta^{\perp}\perp Y\mid\theta^{\parallel}\),
\[
\mathrm{Var}(\theta_d\mid Y)
=\mathbb{E}[\mathrm{Var}(\theta_d\mid\theta^{\parallel})\mid Y]
+\mathrm{Var}(\mathbb{E}[\theta_d\mid\theta^{\parallel}]\mid Y).
\]
The first term vanishes by assumption,
\(\mathrm{Var}(\theta^{\perp}\mid\theta^{\parallel})\to0\),
and the second vanishes because
\(\mathbb{E}[\theta\mid Y]=\bar{\boldsymbol{\theta}}\xrightarrow{p}\theta^{*}\).
Thus the posterior concentrates:
\[
\theta_d\mid Y \xrightarrow{p}\theta^{*}_d
\]
for every \(d \in \{1,2,...,D\}\).
\end{proof}

\section{Pseudo-code for Scale Model}
\label{sec:pseudo_code}

\begin{algorithm}[H]
\DontPrintSemicolon
\SetAlgoLined
\SetKwInOut{Input}{Input}\SetKwInOut{Output}{Output}
\Input{Y}
\Output{S samples of $\boldsymbol\theta = (\theta_1,...,\theta_D)$}
\BlankLine
\For{s in \{1,...,S\}}{
  \vspace{0.5em}
  Sample $\boldsymbol{W^{\parallel(s)}_{.n}} \sim p(\boldsymbol{Y_{.n}})$ \vspace{0.4em} \;
  Calculate $ \boldsymbol{\theta^{\parallel(s)}} = f(\boldsymbol{W^{\parallel(s)}_{.n}})$ \;
  \BlankLine
  Find $\theta^{\perp(s)} = \underset{t \in [t_l, t_u]}{-\text{argmax }} p_D(t;\theta^{\parallel(s)}_1, \theta^{\parallel(s)}_2, ...,\theta^{\parallel(s)}_D)$ \;
  Sample $\boldsymbol{\epsilon} \sim N(0,\tau^2)$ \;
  Calculate $\boldsymbol{\theta^{(s)}}$ = $\boldsymbol{\theta^{\parallel(s)}} + \theta^{\perp(s)}\boldsymbol{1_D} + \boldsymbol{\epsilon}$ 
}
\caption{Sparse SSRV Incorporating Uncertainty through Laplace Approximation}
\end{algorithm}


\begin{algorithm}[H]
\DontPrintSemicolon
\SetAlgoLined
\SetKwInOut{Input}{Input}\SetKwInOut{Output}{Output}
\Input{Y}
\Output{S samples of $\theta$}
\BlankLine
\For{s in \{1,...,S\}}{
  \vspace{0.5em}
  Sample $\boldsymbol{W^{\parallel(s)}_{.n}} \sim p(\boldsymbol{Y_{.n}}) $ \;
  Bootstrap over N : $W^{\parallel(s)}_{d., \hspace{0.1em} boot}\sim W^{\parallel(s)}_{d.}$ \;
  Calculate $\boldsymbol{\theta^{\parallel(s)}} = f(\boldsymbol{W^{\parallel(s)}_{boot}})$ \;
  Bootstrap over D : $\boldsymbol{\theta^{\parallel(s)}_{boot}} \sim \boldsymbol{\theta^{\parallel(s)}}$ \;
  Find $\theta^{\perp(s)} = \underset{t \in [t_l, t_u]}{-\text{argmax }} p_D(t;\theta^{\parallel(s)}_{1,boot}, \theta^{\parallel(s)}_{2,boot}, ...,\theta^{\parallel(s)}_{D,boot})$ \;
  Calculate $\boldsymbol{\theta^{(s)}}$ = $\boldsymbol{\theta^{\parallel(s)}} + \theta^{\perp(s)}\boldsymbol{1_D}$ 
}
\caption{Sparse SSRV Incorporating Uncertainty through Bootstrapping}
\end{algorithm}

\section{Details for Simulation Studies}
\label{sec:sim-benchmark-details}
To mimic realistic scenarios, where increases and decreases in abundance are asymmetric, we simulated data with 80\% of the relevant features having a positive effect and 20\% a negative effect. The effect sizes were independently drawn from an exponential-like distribution ranging from 1 to 6, favoring smaller values. Each scenario was independently replicated 100 times, and results are presented as averages over these replicates. The results are reported in the main article (Figure~\ref{fig:simulation}). We summarized the performance of each method using FDR and TPR, and additionally reported the \(F_{0.5}\) score (\(F_{0.5} = \frac{(1+0.5^2)\cdot\operatorname{TP}}{(1+0.5^2)\cdot\operatorname{TP}+0.5^2\cdot\operatorname{FN}+\operatorname{FP}}\)), which combines these two measures while placing a slightly greater emphasis on FDR. 

As an alternative scenario to Simulation Studies in Section~\ref{sec:benchmarks-simulation}, we repeated the simulation study but with different effect size ratios between positive and negative effect sizes. Here, among truly differentially abundant features, 80\% of them are decreasing in its abundance while 20\% of them are increasing. The results in Figure~\ref{fig:sim_supp} exhibit patterns consistent with those observed in the main simulation study.

\begin{figure}[ht!]
    \centering
    \includegraphics[width=1\linewidth]{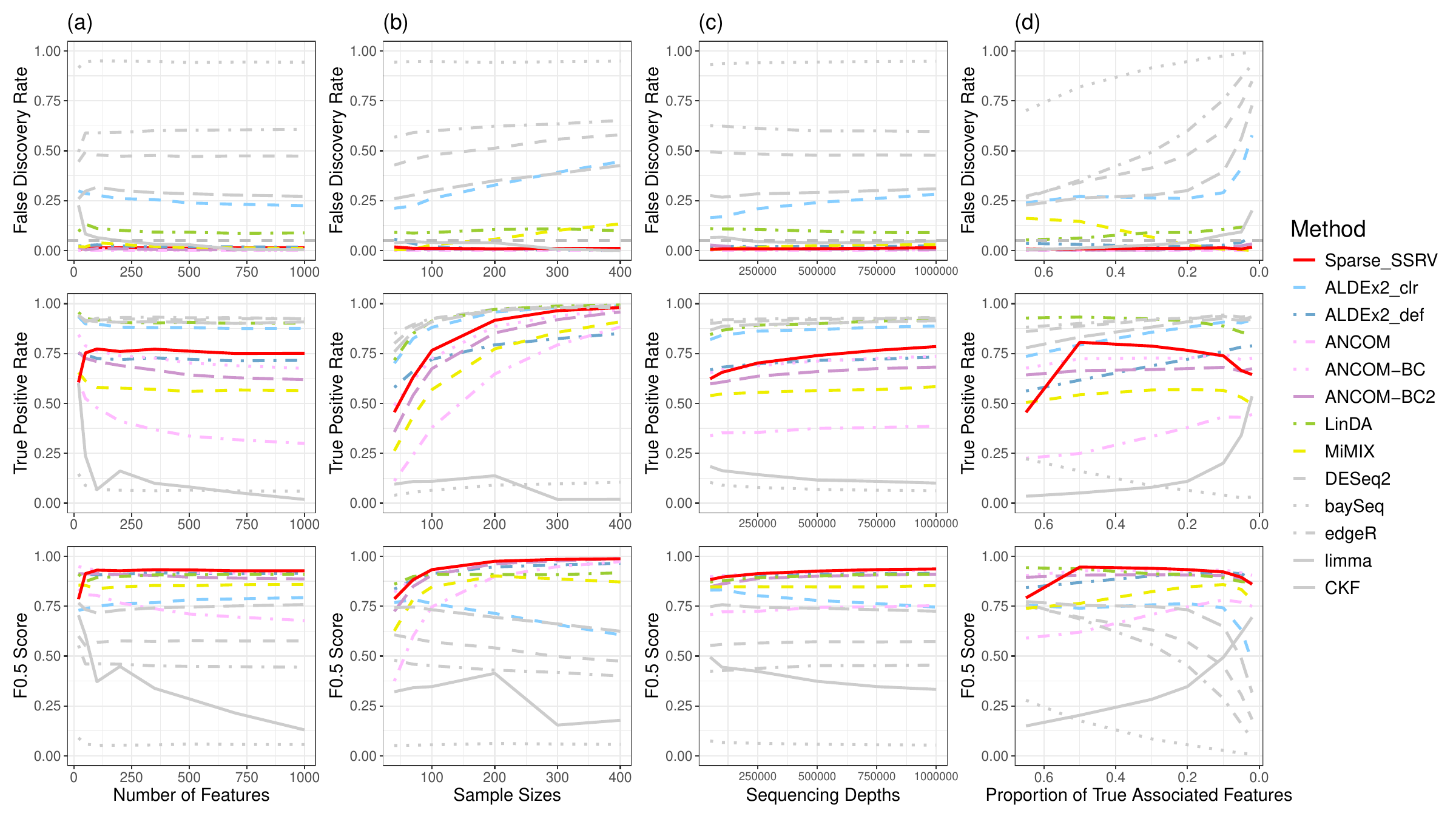}
    \caption{Simulation Study for an alternative scenario where larger number of features are increasing than features that are decreasing. Other details for the setting remain the same as the main simulation studies.}
    \label{fig:sim_supp}
\end{figure}



\section{Details for Real Data Studies} \label{sec:detail_real}
\subsection{McMurrough et al., 2014}
We compared microbial abundance of 1,600 sequence variants for 14 samples, with 7 selected and 7 non-selected conditions. (McMurrough et al., 2014) identified ground truths through validation experiments. We used preprocessed data from ALDEx2 Bioconductor package.
\begin{figure}[ht]
    \centering
    \includegraphics[width=1\linewidth]{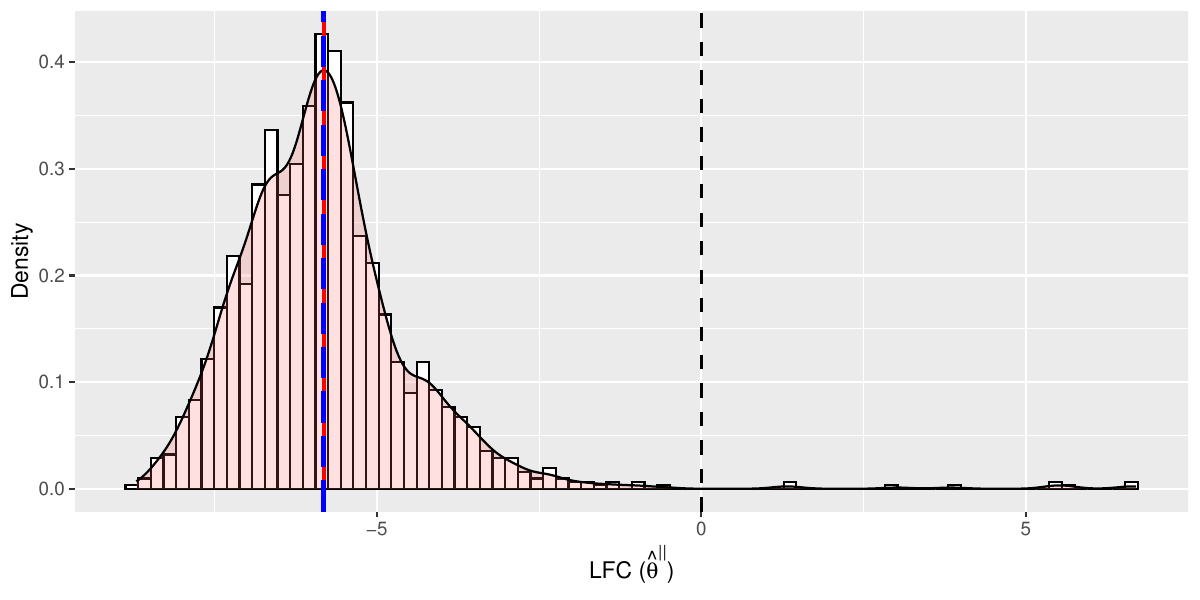}
    \caption{Distribution of LFC values computed from observed reads ($\hat\theta^{\parallel}$) in McMurrough et al. Red line is a mode found using only null features, which we consider as the true mode of null features. Blue dash line represent mode estimated based on LFC values of all features. If red and blue dash lines overlap, it means the underlying system is sparse since the mode found using all features represents the location of LFC values of null features accurately. For this data, we can see red line and blue line overlaps completely, indicating the underlying system is sparse.}
    \label{fig:lfc_mcmurrough}
\end{figure}

\subsection{Zemb et al., 2020}
We compared microbial abundance of taxa in species level for 10 fecal samples where 5 are added with E.coli and 5 are control. These two group of samples are otherwise stable, implying E.coli is the only differentially abundant feature between conditions. The processed data was shared by the author of \cite{zemb2020absolute}. The data initially included 428 species. We then filtered out taxa with an average read count below five, following the same procedure used in our simulation studies.

\begin{figure}[ht]
    \centering
    \includegraphics[width=1\linewidth]{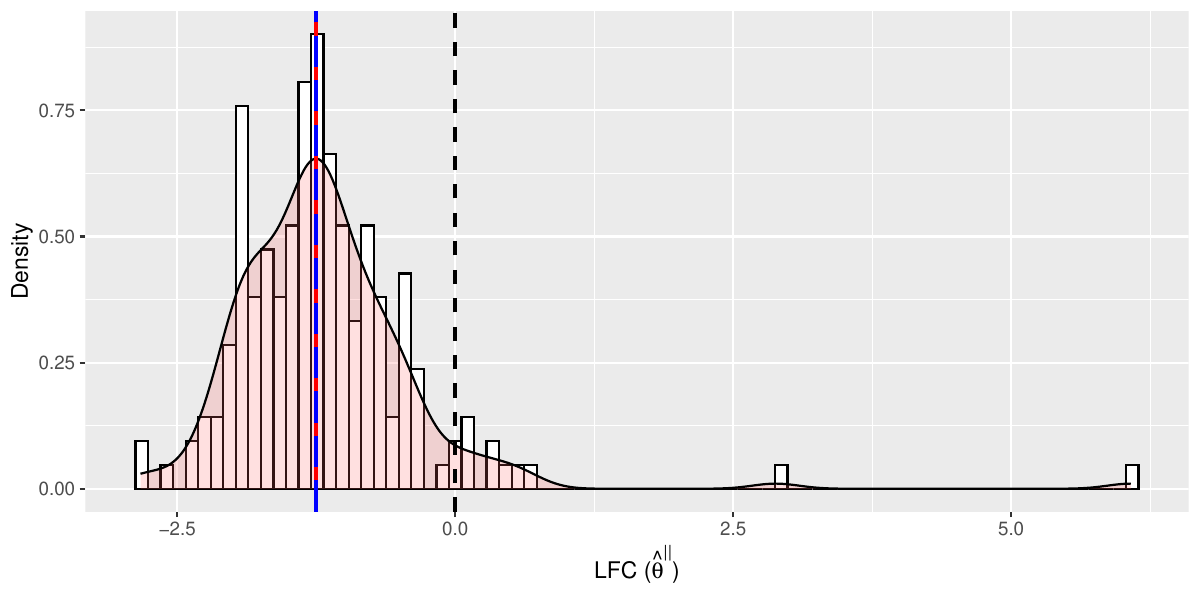}
    \caption{Distribution of LFC values computed from observed reads ($\hat\theta^{\parallel}$) in Zemb et al. Again, red and blue lines overlap completely, suggesting the underlying system is sparse.}
    \label{fig:lfc_zemb}
\end{figure}

\subsection{Stämmler et al., 2016}
In this study, pooled fecal samples of mice are divided to 6 pools and then spiked with Salinibacter ruber, Rhizobium radiobacter and Alicyclobacillus acidiphilus. The amount of spike-in of R.radiobacter and A.acidiphilus are different across pools, whereas the same amount of Salinibacter ruber is added to all pools. Hence, A.acidiphilus and R.radiobacter are the only taxa that are differentially abundant between any pair of pools. The concentration of A. acidiphilus increases progressively from Pool 1 ($1.00\times10^7$) to Pool 6 ($2.43 \times 10^9$). In contrast, R. radiobacter has the highest concentration in Pool 1 ($2.43 \times 10^9$), which gradually decreases through Pool 2 to Pool 6 ($1.00 \times 10^7$). As a result, comparison of Pool 1 and Pool 6 is the perfectly symmetric case---LFC of A.acidiphilus is the exact negative of the one of R.radiobacter. This is also evident from Figure~\ref{fig:lfc_stammler} as two LFC values that are symmetrically located with respect to the main cluster of LFC values.

\begin{figure}[ht]
    \centering
    \includegraphics[width=1\linewidth]{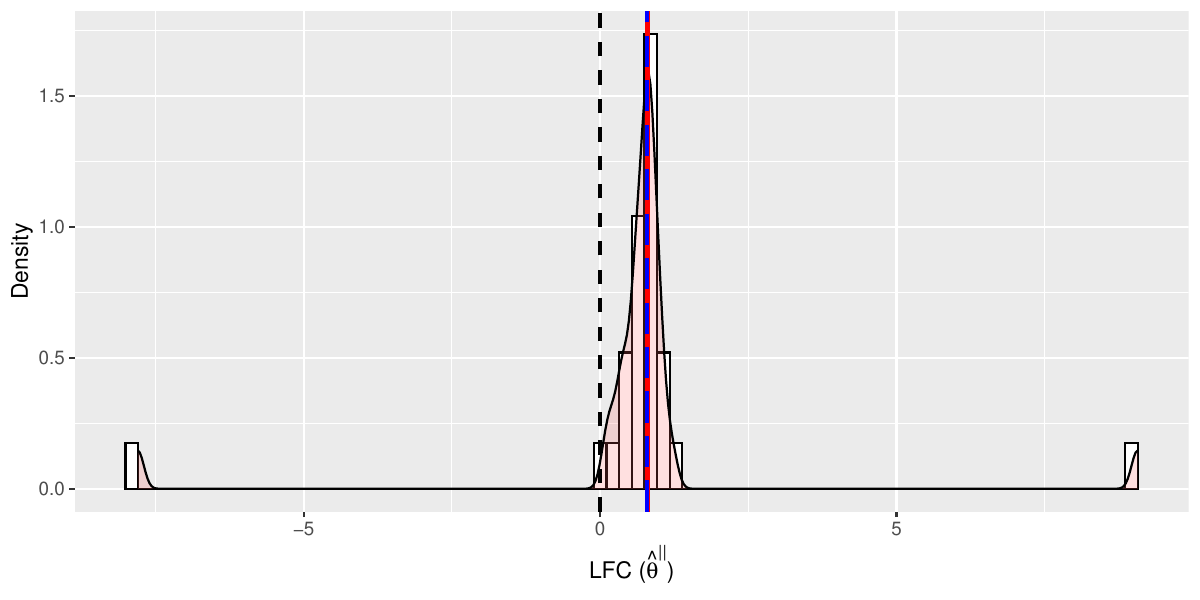}
    \caption{Distribution of LFC values computed from observed reads ($\hat\theta^{\parallel}$) in Stämmler et al. Red and blue lines overlap completely, implying the underlying system is sparse.}
    \label{fig:lfc_stammler}
\end{figure}

\subsection{Jin et al., 2022} \label{sec:supp_jin}
Jin et al. collected sample from mice at distal and proximal location of a cecum and measured cell-based operational taxonomic unit (cOTU). The absolute cell abundance is quantified using BarBIQ, the method proposed in \cite{jin2022quantsinglecell}.  We compared the abundance of microbes at the genus level across 16 samples, with 8 collected from mice on a vitamin A–sufficient diet and the other 8 from mice on a vitamin A–deficient diet. Since there is no known ground truth in this study, we used results from \textit{Gold standard model} to identify taxa to serve as a gold reference, adopted from \cite{nixon2024a}. \textit{Gold standard model} uses ALDEx2 with scale model informed by microbial load and uncertainty in the estimates. Scale model within ALDEx2 is defined as following
$$\log W^{\perp}_n \sim N(\log\mu_n, \gamma^2)$$
where $\mu_n$ represents the microbial load of sample n and $\gamma^2$ represent measurement noise. We found the microbial load and the standard deviation of microbial load measurement in Supplementary Data 8 provided in Jin et al.

\begin{figure}[ht!]
    \centering
    \includegraphics[width=1\linewidth]{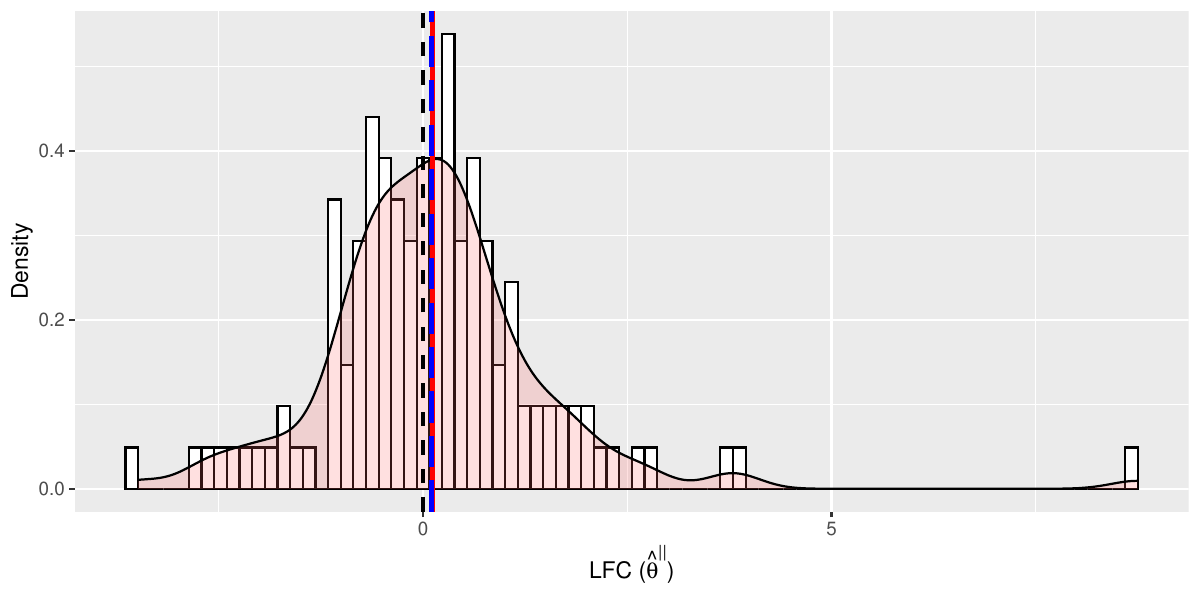}
    \caption{Distribution of LFC values computed from observed reads ($\hat\theta^{\parallel}$) in Jin et al. Red and blue lines overlap completely, implying the underlying system is sparse.}
    \label{fig:lfc_jin}
\end{figure}

\subsection{Prasse et al., 2015}
We compared the abundance of taxa at the genus level across 57 samples from natural and restored tidal freshwater wetlands, with 24 samples from natural sites and 33 from restored sites. To get ground truth, we used \textit{Gold standard model}. Since no standard deviation was provided for the microbial load measurements and no replicates were available, we chose uncertainty $\gamma^2 = 0.5$ which is default value for scale model in ALDEx2. Sensitivity analysis results in Figure~\ref{fig:prasse_sensitivity} shows results based on different $\gamma$ values. As $\gamma$ increases, the Gold Standard Model returns a smaller number of significant features. Although this affects the number of features labeled as false positives and true positives across different methods, the overall performance remains consistent for values of $\gamma$ up to 1.5. The data was preprocessed and uploaded on Github by the authors of \cite{eppschmidt2023benchmarkdata}. We additionally filter out taxa with prevalence less than 10\%. Here, we filtered taxa based on prevalence rather than mean read counts, as mean-based filtering did not effectively remove spurious signals, due to the particularly high variance in taxon abundances across samples in this dataset. We chose prevalence-based filtering procedure as it is already embedded in some of the competing methods including LinDA, ANCOM-BC2, and edgeR.

\begin{figure}[ht!]
    \centering
    \includegraphics[width=1\linewidth]{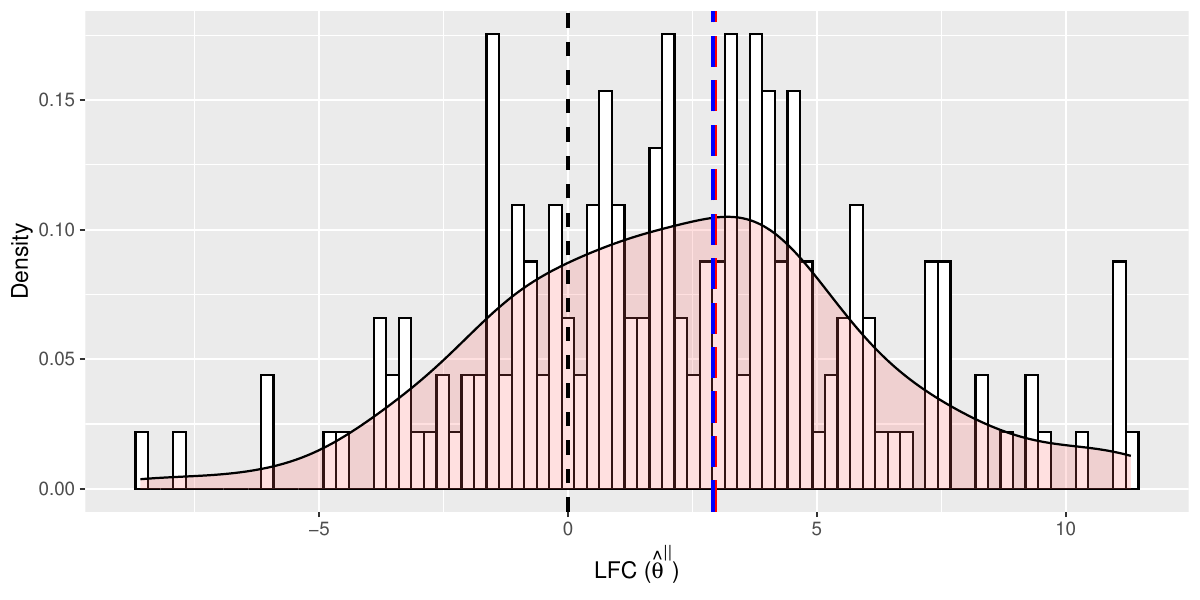}
    \caption{Distribution of LFC values computed from observed reads ($\hat\theta^{\parallel}$) in Prasse et al. study. Here, the red and blue lines almost overlap, but the distribution is so spread out that the density at the blue line and other locations do not differ by a lot. This suggests that the mode obtained based on all LFC values (blue line) is placed within LFCs of null features, but the mode is not very distinct. This implies the underlying system is sparse but with presence of large measurement noise.}
    \label{fig:lfc_prasse}
\end{figure}

\begin{figure}[ht]
    \centering
    \includegraphics[width=1\linewidth]{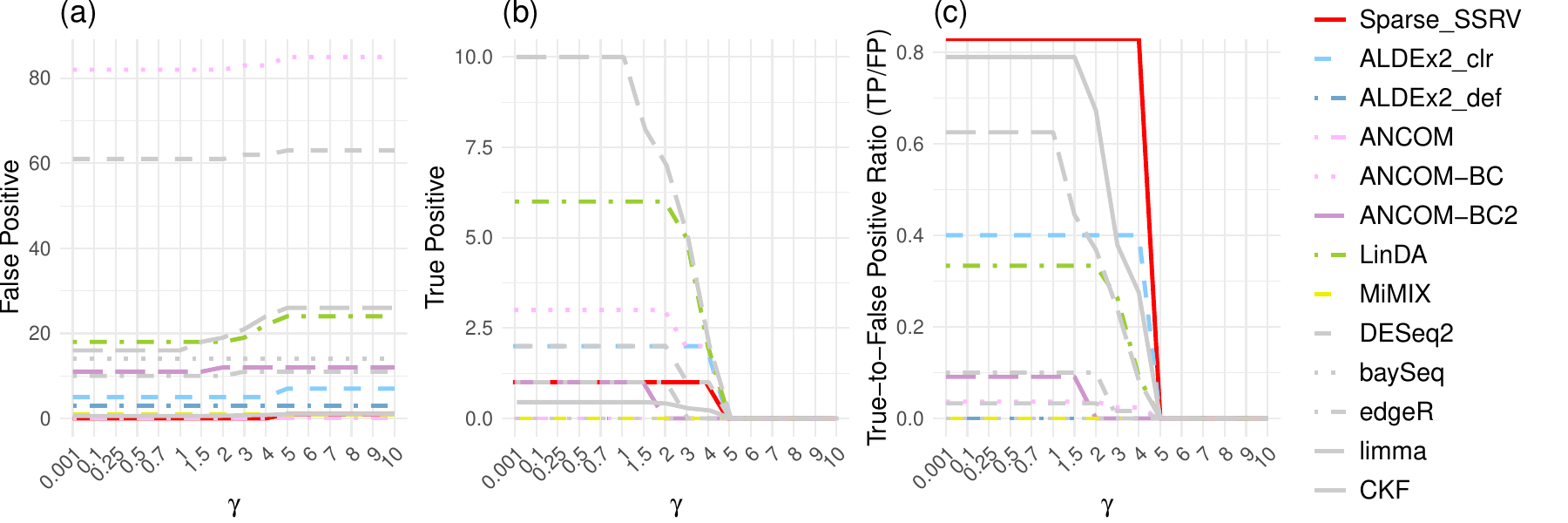}
    \caption{False positive, true positive, and true-to-false positive ratio over varying scale uncertainty ($\gamma$) specified in Gold standard model. Method-specific colors and line types in this figure match those used in Figure~\ref{fig:simulation} for consistency.}
    \label{fig:prasse_sensitivity}
\end{figure}

\subsection{Vandeputte et al., 2017}
We compared abundance of taxa in genus level for 95 samples where 29 are patients with Crohn's Disease and 66 are healthy controls. 
To get ground truth, we used results from \textit{Gold standard model} as in \cite{nixon2024a}. For the scale model 
$$\log W^{\perp}_n \sim N(\log\mu_n, \gamma^2),$$
microbial load of sample n ($\mu_n$) is measured by flow-cytometry. $\gamma^2$ is set to $0.7$, based on the standard deviation of technical replicates of 40 samples. Details on the preprocessing procedure and sensitivity analysis results for choice of $\gamma^2$ can be found in \cite{nixon2024a}.

\begin{figure}[ht!]
    \centering
    \includegraphics[width=1\linewidth]{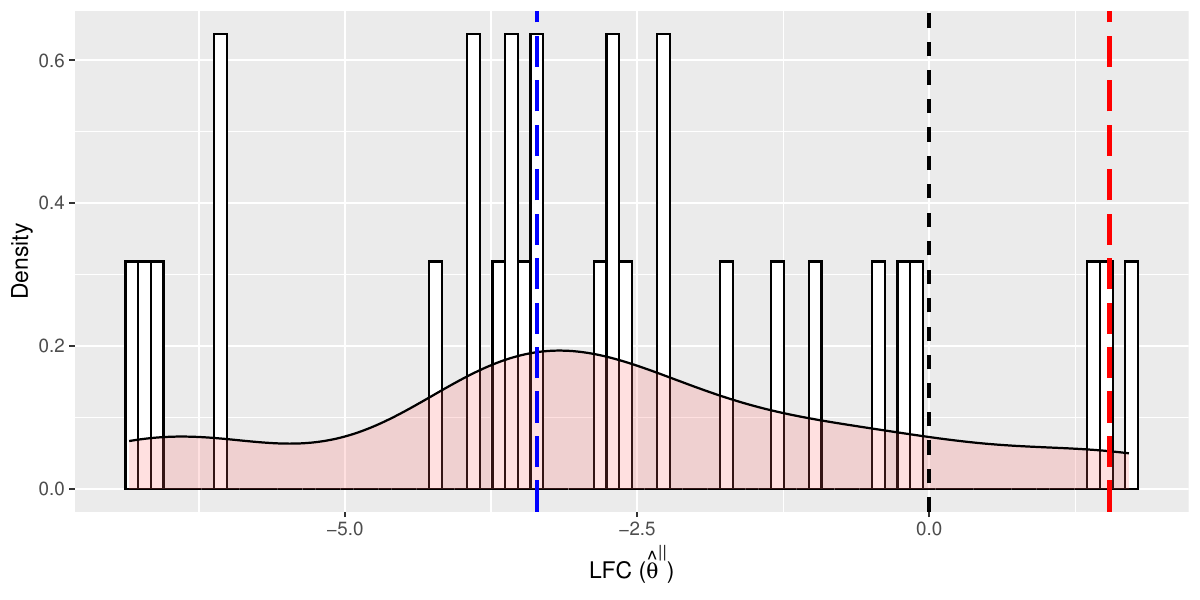}
    \caption{Distribution of LFC values computed from observed reads ($\hat\theta^{\parallel}$) in Stämmler et al. In this case, the red and blue lines are clearly separated, suggesting that the mode obtained based on all LFC values (blue line) do not capture null features. This indicates violation of sparse system assumption.}
    \label{fig:lfc_vandeputte}
\end{figure}

\end{document}